\pgfplotsset{compat=1.18}
\newtheorem{thm}{Theorem}
\newtheorem{lemma}[thm]{Lemma}
\newcolumntype{P}[1]{>{\centering\arraybackslash}m{#1}}
\newcommand{\aux}{\mathrm{aux}}
\newcommand{\anc}{\mathrm{anc}}
\newcommand{\hop}{\mathrm{hop}}
\newcommand{\init}{\mathrm{init}}
\newcommand{\phys}{\mathrm{phys}}
\newcommand{\ord}{\mathrm{ord}}
\newcommand{\Z}{\mathrm{Z}}
\newcommand{\X}{\mathrm{X}}
\newcommand{\Y}{\mathrm{Y}}
\newcommand{\Had}{\mathrm{H}}
\newcommand{\CZ}{\mathrm{CZ}}
\newcommand{\CX}{\mathrm{CX}}
\newcommand{\Id}{\mathbb{1}}
\begin{document}

\title{Efficient Simulation of Sparse, Non-Local Fermion Models}
\date{\today}
\author{Reinis Irmejs}
\email{reinis.irmejs@mpq.mpg.de}
\affiliation{Max-Planck-Institut für Quantenoptik, Hans-Kopfermann-Straße 1, D-85748 Garching, Germany}
\affiliation{Munich Center for Quantum Science and Technology (MCQST), Schellingstraße 4, D-80799 Munich, Germany}
\author{J. Ignacio Cirac}
\affiliation{Max-Planck-Institut für Quantenoptik, Hans-Kopfermann-Straße 1, D-85748 Garching, Germany}
\affiliation{Munich Center for Quantum Science and Technology (MCQST), Schellingstraße 4, D-80799 Munich, Germany}

\begin{abstract}
Efficient simulation of interacting fermionic systems is a key application of near-term quantum computers, but is hindered by the overhead required to encode fermionic operators on qubit hardware. Here, we consider models with $N$ fermionic modes in which each participates in at most a constant number $d$ of interactions and study the circuit depth required to implement the Trotterized time evolution on qubit hardware with all-to-all connectivity. We introduce an encoding that augments each physical fermionic mode with a small number of auxiliary fermions, enabling the removal of Jordan--Wigner strings. Although the preparation of the auxiliary fermion state incurs an initial overhead, this state remains invariant under time evolution. As a result, long-time evolution can be implemented with asymptotically optimal circuit depth, reducing a previously multiplicative $O(\log N)$ overhead to an additive overhead. Our results thus establish that the simulation of sparse fermionic models on qubit hardware matches the performance achievable on ideal fermionic hardware up to constant factors and $O(dN)$ ancillary qubits.
\end{abstract}

\maketitle
\section{Introduction}\label{sec:introduction}

Accurate and large scale simulation of interacting fermionic models is widely regarded as one of the earliest practically useful applications of near term quantum computers. Progress in this direction would not only deepen our understanding of fundamental physics but also support applied goals such as the design and characterization of novel materials \cite{Yudong_2019, Macardle_2020, Clinton_2024}. The two dimensional Fermi Hubbard model exemplifies this opportunity, as it is believed to capture essential features of high temperature superconductivity \cite{auerbach2012, Qin_2022}. Despite its apparent simplicity, the model remains poorly understood in the low temperature regime, where experimental access is limited \cite{mckay2011cooling, altman2021quantum, Greiner2025}. Recent advances in quantum computing platforms \cite{willow_2024, Helios_2025} and in digital fermion simulation \cite{Phasecraft_FH_2025} now point toward overcoming these limitations.

A central computational task in the simulation of fermionic models is the implementation of time evolution under a given Hamiltonian \cite{Gilyen_2019, Childs_2021}. Access to time evolution enables the estimation of correlation functions \cite{Bauer2016, Kreula2016, Kosugi2020, Cruz_2025}, the evaluation of finite energy properties through filtering techniques \cite{Lu_2021}, the approximation of ground state properties via adiabatic evolution \cite{messiah_quantum_2020, Jansen2007, Lidar2018}, and serves as a key component in variational algorithms \cite{Peruzzo_2014, Tilly_2022}. Simulation of time dynamics is believed to be difficult on classical computers \cite{Jordan2018}, emphasizing the need for efficient quantum computing algorithms.

The simulation of fermionic quantum many body systems on standard quantum computers requires encoding fermionic operators into qubit degrees of freedom. Implementing such encodings is challenging because fermionic operators satisfy canonical anticommutation relations that are not naturally captured by qubit operator algebras. To avoid this problem altogether, fermionic quantum computers have been proposed, in which fermionic statistics are implemented directly at the hardware level \cite{Bravyi2002_ferm}. While this approach eliminates the need for fermion to qubit encodings, it introduces substantial challenges for the implementation of quantum error correction, which remains significantly less developed than in qubit based architectures \cite{Pichler2024, Schukert2025}.

Motivated by these challenges, considerable effort has been devoted to optimizing fermionic encodings on qubit hardware. The simplest and most widely used mapping is the Jordan--Wigner transformation \cite{jordan1928paulische}, in which fermionic modes are ordered along a one dimensional chain. Fermionic anti commutation relations are enforced through non local string operators acting on all preceding sites, commonly referred to as Jordan--Wigner strings. As a consequence, fermionic interactions between sites $i$ and $j$ acquire a Pauli weight proportional to $|i-j|$. This leads to severe overheads for non local interactions, and already in two dimensional lattice models it is impossible to represent both horizontal and vertical nearest neighbor couplings with constant Pauli weight \cite{Guaita_2025}. More generally, models with non local fermionic terms result in encodings whose Pauli weight can scale extensively with system size.

The quality of more advanced fermionic encodings is naturally quantified by the circuit depth required to implement a single Trotter step of the time evolution. Achieving optimal depth requires both low Pauli weight of the encoded Hamiltonian terms and the ability to parallelize their implementation. For this reason, the Trotter step depth has been widely adopted as a benchmark for fermionic encodings \cite{Babbush_2018, Kivlichan2018, Clinton_2024, maskara2025, Childs2025}. To assess the optimality of a given encoding, this depth, together with the number of ancillary qubits, must be compared to the resources required to perform the same simulation directly on fermionic hardware with identical connectivity.

Whether a fermionic encoding can achieve constant overhead in circuit depth depends crucially on two factors: (i) the locality and sparsity of the Hamiltonian interactions and (ii) the geometry of the underlying quantum hardware. For local fermionic models in two or higher dimensions, numerous local to local encodings have been developed that achieve constant depth overhead on local qubit architectures, in some cases with additional error correcting structure \cite{verstraete2005mapping, setia2019superfast, jiang2019majorana, Derby_2021, chien2023simulatingquantumerrormitigation, landahl2023logicalfermionsfaulttolerantquantum, Parella2024, Obrien2024, Algaba_2024}. For all to all interacting fermionic models, such as Fermi--Hubbard type Hamiltonians, fermionic SWAP networks allow for optimal linear depth simulation on local qubit hardware \cite{Kivlichan2018}. The remaining scenario is the simulation of fermionic models with non local but sparse interactions on hardware with all to all connectivity. In this regime, each fermionic mode interacts with at most a constant number $d$ of other modes, as realized, for instance, in sparse instances of the SYK model or, more generally, in fermionic models defined on $d$-regular graphs.
 \cite{Tikhonov_2019, Garcia_2021, Herasymenko2023optimizingsparse, 
 Herre_2023, Tsironis_2024}.

Furthermore, such sparse and random models are believed to have ground states that can be efficiently prepared using quantum phase estimation, further emphasizing the need for efficient time-evolution algorithms \cite{Chen_2024}. On fermionic hardware with all-to-all connectivity, a single Trotter step for such models can be implemented in $\mathcal{O}(d)$ depth. We consider the simulation of this regime on qubit hardware with identical connectivity \cite{Helios_2025}. This regime can also be addressed using graph-based superfast encodings \cite{BRAVYI2002, setia2019superfast}. In this case, one needs $\mathcal{O}(dN)$ qubits, with the encoded fermionic terms having weight $\mathcal{O}(d)$ and $\mathcal{O}(\log d)$, respectively. The prohibitive cost of these approaches comes from the initialization of the stabilizer state, which requires $\mathcal{O}(N)$ circuit depth \cite{BRAVYI2002}.

Recent work has made substantial progress toward improving the dependence on the system size $N$. Using general fermionic permutations, it was shown that a single Trotter step can be implemented with depth $\mathcal{O}(d \log N)$ using an extensive number of ancillary qubits \cite{maskara2025, Childs2025}.
In this work, we build on these results and develop an approach that reduces the $\mathcal{O}(\log N)$ overhead from multiplicative to additive for long-time simulation.

Our method is inspired by \cite{verstraete2005mapping} and introduces a constant number of auxiliary fermions per physical site, which are used to construct stabilizer operators that eliminate Jordan--Wigner strings. To ensure equivalence with the original fermionic model, the auxiliary fermions are initialized in the joint $+1$ eigenspace of all stabilizers. The fermionic construction of the stabilizers allows us to use fermionic permutation techniques to initialize this state in $\mathcal{O}(\log N)$ depth. While preparing this state constitutes the dominant cost of the approach, it remains invariant under the subsequent time evolution. Indeed, since the preparation of the auxiliary fermion state is required only once, all subsequent Trotter steps can be carried out with constant overhead. As a result, if the total number of Trotter steps satisfies $M = \Omega(\log N)$, the total asymptotic cost of the simulation scales as $\mathcal{O}(M)$. In this regime, the total computational cost matches that of a fermionic quantum computer with the same connectivity.

In the next section, we introduce our approach and show how it maps fermionic operators to qubit operators with constant Pauli weight, together with the construction of the required auxiliary fermion initial state. In \cref{sec:time_dynamics}, we then summarize the full protocol and derive the total cost of the time evolution simulation. We conclude in \cref{sec:conclusions}.

\section{Method}

\subsection{Sparse, non-local Hamiltonian}

The objective of this work is to improve the overhead of performing long-time evolution of non-local fermion models. In this work, we consider Hamiltonians with sparse interactions and hoppings. To illustrate the idea, we consider a non-local hopping Hamiltonian:
\begin{equation}\label{eq:hop_on_graph}
    H_{\mathrm{hop}}
    = \sum_{(i,j)\in E} t_{ij}\,(a_i^\dagger a_j + a_j^\dagger a_i).
\end{equation}
 In \cref{app:JW_transformations} we show how to use our transformation for general fermion interactions and show how to simulate sparse Fermi--Hubbard and SYK models in \cref{app:example_models}. To encode the connectivity pattern, we use a graph $\mathcal{G} = (V,E)$: we assume that $(i,j)\in E$ exactly when the hopping term between sites $i$ and $j$ is present ($t_{ij}\neq 0$). 

To illustrate the problem that arises for non-local models consider a hopping term between sites $i$ and $j$ ($i<j$) under the Jordan--Wigner (JW) transformation \cite{jordan1928paulische}:
\begin{equation}
    a_i^\dagger a_j + a_j^\dagger a_i = \frac{1}{2}(\X_i\X_j+\Y_i\Y_j)\prod_{k=i}^{j-1}\Z_k.
\end{equation}
For sites $i$ and $j$ that are far apart, the problem comes from the term $\prod_{k=i}^{j-1}\Z_k,$ which has support on all sites between them. This leads to a high cost when implementing time-evolution since it highly restricts the number of terms we can implement in parallel. 

The main aim of our work is to perform a transformation on each Hamiltonian term and map it to a qubit term that has a constant Pauli weight. We consider a transformation of form:
\begin{align}
\label{eq:trans_intro}
    &a_i^\dagger a_j + a_j^\dagger a_i \xrightarrow[]{} (a_i^\dagger a_j + a_j^\dagger a_i)P_{ij},\\&\text{where it is has support on $i,j$ only.}\nonumber
\end{align}
Note that for local Hamiltonians on local hardware this corresponds to \cite{verstraete2005mapping}. Combined with access to a non-local hardware this will allow us to devise a scheme to perform the time-evolution in an asymptotically efficient way. In the next sections, we will show how to construct the transformation based on the stabilizer operators $P_{ij}$, as well as, how to enforce that the physics remain unchanged.

\subsection{Setup}
\label{sec:fermion_setup}

We consider a system of $N$ sites, labeled $i = 1, \dots, N$. On every site we have one physical fermionic mode $a_i$. In addition, for every site $i$ we introduce $\nu$ auxiliary fermions $b^{(\ell)}_i$, where $1\leq \ell \leq \nu$. We assume a JW encoding where the fermionic modes are ordered as in \cref{fig:fermionic-lattice}: 
\begin{equation*}
    \bigl( a_1, b^{(1)}_1,\dots,b^{(\nu)}_1,
    a_2, b^{(1)}_2,\dots,b^{(\nu)}_2,
    \dots,
    a_N, b^{(1)}_N,\dots,b^{(\nu)}_N
    \bigr).
\end{equation*}
 We associate a qubit with each fermionic mode, and we denote the Pauli operators acting on the qubit corresponding to mode type $(\ell)$ at site $i$ by $\X_i^{(\ell)}$, $\Y_i^{(\ell)}$, and $\Z_i^{(\ell)}$. The physical modes $a_i$ are associated with the superscript $(0)$.
 Under the JW transformation, the fermionic operators are:
 \begin{align}
    a_i
     &= S_i\,\frac{\X^{(0)}_i - i\Y^{(0)}_i}{2},
    \label{eq:JW_a_i}\\
    b_i^{(\ell)}
     &= S_i\,Z_{i,<\ell} \frac{\X^{(\ell)}_i - i\Y^{(\ell)}_i}{2},\quad \text{where}\\
     S_i&= \prod_{j=1}^{i-1} \prod_{k=0}^{\nu} \Z_j^{(k)}\quad\text{and}\quad Z_{i,<\ell}
    = \prod_{m=0}^{\ell-1} \Z_i^{(m)}. 
    \label{eq:JW_b_i_l}
\end{align}
We introduce $S_i$ and $Z_{i,<\ell}$ to simplify the notation of the \emph{JW strings}. 
\begin{figure}[t]
  \centering
  \begin{tikzpicture}[
      x=1.4cm,
      y=1.1cm,
      site/.style={circle, draw, minimum size=8mm, inner sep=0pt},
      phys/.style={site, draw=cyan!60!black, fill=cyan!10},
      aux/.style={site, draw=olive!60!black, fill=olive!10},
      jwline/.style={dashed, thin, opacity=0.75},
      >=stealth
    ]

    % updated row positions
    \def\yone{-0.5}
    \def\ytwo{-1.5}
    \def\ythree{-2.5}
    \def\ydots{-3.5}
    \def\yn{-4.5}

    % y-coordinates to create a gap around the "dots" row
    \def\ymidupper{-3.0}  % above \ydots
    \def\ymidlower{-4.0}  % below \ydots

    %--------------------------------------------------
    % JW ordering path (drawn first, so circles hide it)
    %--------------------------------------------------

    \draw[rounded corners=6pt, fill=olive!2, draw=olive!40!black]
    (1.65,\yone+1.0) rectangle (2.35,\yn-0.5);

    % row 1: through centers with gap at the dots
    \draw[jwline]
      (0,\yone) -- (1,\yone) -- (2,\yone) -- (2.5,\yone);
    \draw[jwline]
      (3.5,\yone) -- (4,\yone);

    % connection row 1 -> row 2 (right angle path)
    \draw[jwline]
      (4,\yone) -- (4,\yone-0.5) -- (0,\yone-0.5) -- (0,\ytwo);

    % row 2 with gap at the dots
    \draw[jwline]
      (0,\ytwo) -- (1,\ytwo) -- (2,\ytwo) -- (2.5,\ytwo);
    \draw[jwline]
      (3.5,\ytwo) -- (4,\ytwo);

    % connection row 2 -> row 3
    \draw[jwline]
      (4,\ytwo) -- (4,\ytwo-0.5) -- (0,\ytwo-0.5) -- (0,\ythree);

    % row 3 with gap at the dots
    \draw[jwline]
      (0,\ythree) -- (1,\ythree) -- (2,\ythree) -- (2.5,\ythree);
    \draw[jwline]
      (3.5,\ythree) -- (4,\ythree);

    % connection row 3 -> gap above vertical dots row
    \draw[jwline]
      (4,\ythree) -- (4,\ymidupper) -- (0,\ymidupper)--(0, \ymidupper-0.1);

    % (no line in the vertical dots row)

    % connection from below dots gap to row N
    \draw[jwline]
      (0,\ymidlower) -- (0,\yn);

    % final row N with gap at the dots
    \draw[jwline]
      (0,\yn) -- (1,\yn) -- (2,\yn) -- (2.5,\yn);
    \draw[jwline]
      (3.5,\yn) -- (4,\yn);

    %----------------
    % Labels and nodes
    %----------------

    % column labels
    \node[anchor=south] at (0,0.7) {\shortstack{physical\\modes}};
    \node[anchor=south] at (2.5,0.7) {\shortstack{auxiliary\\modes}};
    \node[anchor=south] at (1,0.1) {$\ell=1$};
    \node[anchor=south] at (2,0.1) {$\ell=2$};
    \node[anchor=south] at (3,0.1) {$\cdots$};
    \node[anchor=south] at (4,0.1) {$\ell=\nu$};

    % row labels
    \node[anchor=east] at (-0.6,\yone)   {$i=1$};
    \node[anchor=east] at (-0.6,\ytwo)   {$i=2$};
    \node[anchor=east] at (-0.6,\ythree) {$i=3$};
    \node[anchor=east] at (-0.6,\ydots)  {$\vdots$};
    \node[anchor=east] at (-0.6,\yn)     {$i=N$};

    % physical modes
    \node[phys] at (0,\yone)   {$a_{1}$};
    \node[phys] at (0,\ytwo)   {$a_{2}$};
    \node[phys] at (0,\ythree) {$a_{3}$};
    \node        at (0,\ydots) {$\vdots$};
    \node[phys] at (0,\yn)     {$a_{N}$};

    % auxiliary modes
    % row 1
    \node[aux] at (1,\yone) {$b_{1}^{(1)}$};
    \node[aux] at (2,\yone) {$b_{1}^{(2)}$};
    \node      at (3,\yone) {$\cdots$};
    \node[aux] at (4,\yone) {$b_{1}^{(\nu)}$};

    % row 2
    \node[aux] at (1,\ytwo) {$b_{2}^{(1)}$};
    \node[aux] at (2,\ytwo) {$b_{2}^{(2)}$};
    \node      at (3,\ytwo) {$\cdots$};
    \node[aux] at (4,\ytwo) {$b_{2}^{(\nu)}$};

    % row 3
    \node[aux] at (1,\ythree) {$b_{3}^{(1)}$};
    \node[aux] at (2,\ythree) {$b_{3}^{(2)}$};
    \node      at (3,\ythree) {$\cdots$};
    \node[aux] at (4,\ythree) {$b_{3}^{(\nu)}$};

    % dots row (no line across it)
    \node at (1,\ydots) {$\vdots$};
    \node at (2,\ydots) {$\vdots$};
    \node at (3,\ydots) {$\ddots$};
    \node at (4,\ydots) {$\vdots$};

    % row N
    \node[aux] at (1,\yn) {$b_{N}^{(1)}$};
    \node[aux] at (2,\yn) {$b_{N}^{(2)}$};
    \node      at (3,\yn) {$\cdots$};
    \node[aux] at (4,\yn) {$b_{N}^{(\nu)}$};

\end{tikzpicture}
  \caption{Physical modes $a_i$ (blue) and auxiliary modes $b_i^{(\ell)}$ (olive) together with the JW ordering path (dashed). The rectangular selection encompasses one whole auxiliary mode family.}
  \label{fig:fermionic-lattice}
\end{figure}
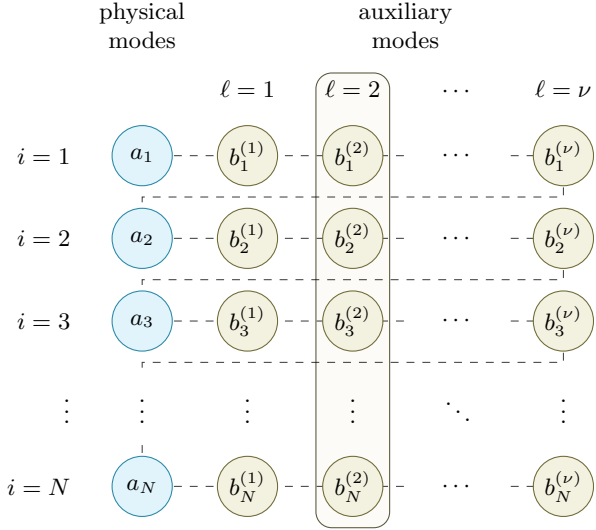
In order to construct the transformation, we will use the Majorana operators associated with the above auxiliary modes $b_i^{(\ell)}$. These are simply defined as:
\begin{align}
    c_i^{(\ell)} &= b_i^{(\ell)} + b_i^{\dagger(\ell)},
    \label{eq:def_c}\\
    d_i^{(\ell)} &= -i(b_i^{(\ell)} - b_i^{\dagger(\ell)}),
    \label{eq:def_d}
\end{align}
and they obey the canonical Majorana algebra:
\begin{align}
    \{c_i^{(k)}, c_j^{(\ell)}\} &= \{d_i^{(k)}, d_j^{(\ell)}\} = 2\delta_{ij}\delta_{k\ell},\;\;
    \{c_i^{(k)}, d_j^{(\ell)}\} = 0.
    \label{eq:majorana_id}
\end{align}
When mapping Hamiltonian terms to qubits, we will use Pauli weight $w$ to denote the number of qubits on which a term acts non-trivially.

\subsection{Constructing the encoding}
\label{sec:contructing_encoding}

The key idea in constructing this transformation is to create a fermionic, hermitian operator $P^{(\ell)}_{ij}$ from the auxiliary fermion modes possessing the same \emph{JW string}. Subsequently, by transforming the initial Hamiltonian terms as in \eqref{eq:trans_intro}, we will be able to eliminate the extensive \emph{JW strings}, resulting in a term that has a Pauli weight independent of the distance between the two interacting modes $\abs{i-j}$. However, such transformation does not preserve the physics in the full Hilbert space. To achieve that, we require that the auxiliary fermion modes are prepared in the $+1$ eigenstate of $P^{(\ell)}_{ij}$. We will refer to $P^{(\ell)}_{ij}$ as stabilizers. Thus, the main two challenges of this construction are:
\begin{enumerate}
    \item constructing the stabilizers $P^{(\ell)}_{ij}$, such that they can eliminate the \emph{JW strings} for all present interaction terms in the initial Hamiltonian,
    \item efficiently preparing the initial state on the auxiliary modes $\ket{\Psi_{\aux}}$, such that:
    \begin{equation}
    \label{eq:P_stabilizer}
        P^{(\ell)}_{ij} \ket{\Psi_{\aux}} = +\ket{\Psi_{\aux}}
    \end{equation}
    for every relevant stabilizer $P_{ij}^{(\ell)}$.
\end{enumerate}
The condition \cref{eq:P_stabilizer} asserts that the physics of the initial Hamiltonian remain unchanged, and the auxiliary state remains invariant:
\begin{equation}
\label{eq:enforce_physics}
    e^{-i\tau h_{ij}P_{ij}^{(\ell)}}\ket{\Psi_{\aux}}\ket{\Psi_{\phys}} = \ket{\Psi_{\aux}}e^{-i\tau h_{ij}}\ket{\Psi_{\phys}}.
\end{equation}

\paragraph{Constructing the stabilizers.} Firstly, to construct the stabilizers we use the Majorana operators from \cref{eq:def_c,eq:def_d}. A convenient way is to do it similarly as in \cite{verstraete2005mapping}:
\begin{equation}
    \label{eq:Pij_def}
    P_{ij}^{(\ell)} = ic_i^{(\ell)}d_j^{(\ell)} = -iS_iZ_{i,<\ell}X_i^{(\ell)}S_jZ_{j,<\ell}Y_j^{(\ell)}
\end{equation}
One can easily check that $P_{ij}^{(\ell)}$ is hermitian and $P_{ij}^{(\ell)2}=1$, thus it has eigenvalues of $\pm1$ (see \cref{app:stabilizer_properties}). By using the Majorana operators on sites $i,j$ we ensure that the stabilizer will have the same \emph{JW string} we wish to simplify. As an example, consider the transformation of the aforementioned hopping term from  \cref{eq:trans_intro}. Under the JW transform:
\begin{align}
    \label{eq:trans_JW_example}
    &(a_i^\dagger a_j +a_j^\dagger a_i)P_{ij}^{(\ell)}\xrightarrow[]{}
    \frac{1}{2}(\X^{(0)}_i\X^{(0)}_j+\Y^{(0)}_i\Y^{(0)}_j)S_i S_j \times\nonumber \\
    &\times (-iS_iZ_{i,<\ell}\X_i^{(\ell)}S_jZ_{j,<\ell}\Y_j^{(\ell)}) \nonumber\\&= \frac{i}{2}(\X^{(0)}_i\X^{(0)}_j+\Y^{(0)}_i\Y^{(0)}_j)Z_{i,<\ell}Z_{j,<\ell}\X_i^{(\ell)}\Y_j^{(\ell)}.
\end{align}
From the transformation, one can see that the Pauli weight of the transformed hopping term is $w_{\mathrm{hop}} = 2\ell+2$. In \cref{app:JW_transformations} we provide a detailed description of how any even fermionic term can be transformed to a local qubit operator by these stabilizers.

\paragraph{Stabilizer initial state.} Secondly, to ensure that the Hamiltonian evolution remains unchanged under the transformation we need to ensure that $P_{ij}^{(\ell)}$ is a $+1$ eigenstate of $\ket{\Psi_{\aux}}$. For a single stabilizer that is given by:
\begin{equation}
\label{eq:single_P_eigenstate}
    P_{ij}^{(\ell)}\frac{c^{(\ell)}_i-id^{(\ell)}_j}{\sqrt{2}}\ket{0_{\aux}} = \frac{c^{(\ell)}_i-id^{(\ell)}_j}{\sqrt{2}}\ket{0_{\aux}}.
\end{equation}
However, any Hamiltonian of interest will have interactions between multiple modes, hence we have to ensure that all associated stabilizers are the mutual $+1$ eigenstates of $\ket{\Psi_{\aux}}$. A necessary condition to ensure that is that all stabilizers mutually commute. Since there can be several unique interactions per each lattice mode, our construction uses multiple ($\nu$) ancillary registers per each physical site, which allows to group them in a commuting way. In the next section, we show how to construct the auxiliary initial state for an arbitrary Hamiltonian.

\subsection{Preparing the full auxiliary state}
\label{sec:aux_state_preparation}

\paragraph{Grouping the stabilizers.} First, to ensure that all stabilizers commute, we need to group them appropriately between the auxiliary registers. In the following, we present an optimal grouping strategy based on graph coloring that ensures that we need to use the minimum number $\nu$ of ancillary qubits per site. For simplicity, here we consider a particular example of an interaction graph as given in \cref{fig:coloring_example}~(a), while a rigorous treatment is given in \cref{app:stabilizer_grouping}.
\begin{figure}
    \centering
    \begin{tikzpicture}[
    scale=2,
    every node/.style={circle, draw, thick, minimum size=6mm, inner sep=0pt, fill = gray!10},
    edgecyan/.style={line width=2pt, cyan!80!black},
    edgeorange/.style={line width=2pt, orange!80!black},
    edgeolive/.style={line width=2pt, olive!80!black},
    edgepurple/.style={line width=2pt, magenta!70!black},
    labeltext/.style={sloped, above, inner sep=0pt, outer sep=0pt, draw=none, fill=none,shape=rectangle, font=\normalsize},
    arrowstyle/.style={-{Stealth[length=10pt]}}
]

% ===== LEFT GRAPH: all four groups =====
\begin{scope}
\node[labeltext] at (-0.8,0.95) {(a)};
  % Hexagon vertices
  \node (v1) at (-0.7,0.35)   {1};
  \node (v2) at (0,1.05)      {2};
  \node (v3) at (0.7,0.35)    {3};
  \node (v4) at (0.7,-0.35)   {4};
  \node (v5) at (0,-1.05)     {5};
  \node (v6) at (-0.7,-0.35)  {6};

  % Inner vertices
  \node (v7) at (0,-0.35)     {7};
  \node (v8) at (0,0.35)      {8};

  % Hexagon edges (black)
  \draw[thick] (v1) -- (v2) -- (v3) -- (v4) -- (v5) -- (v6) -- (v1);

  % Group 1: CYAN (5-6, 4-7, 2-8)
  \draw[edgecyan] (v5) -- (v6);
  \draw[edgecyan] (v4) -- (v7);
  \draw[edgecyan] (v2) -- (v8);

  % Group 2: OLIVE (4-5, 6-7, 2-1, 3-8)
  \draw[edgeolive] (v4) -- (v5);
  \draw[edgeolive] (v6) -- (v7);
  \draw[edgeolive] (v2) -- (v1);
  \draw[edgeolive] (v3) -- (v8);

  % Group 3: ORANGE (1-8, 3-4, 5-7)
  \draw[edgeorange] (v1) -- (v8);
  \draw[edgeorange] (v3) -- (v4);
  \draw[edgeorange] (v5) -- (v7);

  % Group 4: PURPLE (1-6, 2-3, 7-8)
  \draw[edgepurple] (v1) -- (v6);
  \draw[edgepurple] (v2) -- (v3);
  \draw[edgepurple] (v7) -- (v8);
\end{scope}

% ===== RIGHT GRAPH: only cyan and olive, with arrows + sloped labels =====

\begin{scope}[xshift=2cm]
    \draw[rounded corners=6pt, fill=olive!2, draw=olive!40!black]
        (-1,1.3) rectangle (1,-1.3);
    \node[labeltext] at (-0.8,0.95) {(b)};
  % Vertices
  \node (w1) at (-0.7,0.35)   {1};
  \node (w2) at (0,1.05)      {2};
  \node (w3) at (0.7,0.35)    {3};
  \node (w4) at (0.7,-0.35)   {4};
  \node (w5) at (0,-1.05)     {5};
  \node (w6) at (-0.7,-0.35)  {6};
  \node (w7) at (0,-0.35)     {7};
  \node (w8) at (0,0.35)      {8};

  % Group 1: cyan
  % Group 1: cyan
  \draw[edgecyan,arrowstyle] (w5) -- node[labeltext] {$P_{56}$} (w6) ;
  \draw[edgecyan,arrowstyle] (w7) -- node[labeltext] {$P_{74}$} (w4);
  \draw[edgecyan,arrowstyle] (w2) -- node[labeltext] {$P_{28}$} (w8);

  % Group 2: olive
  \draw[edgeolive,arrowstyle] (w4) -- node[labeltext] {$P_{45}$} (w5);
  \draw[edgeolive,arrowstyle] (w6) -- node[labeltext] {$P_{67}$} (w7);
  \draw[edgeolive,arrowstyle] (w1) -- node[labeltext] {$P_{12}$} (w2);
  \draw[edgeolive,arrowstyle] (w8) -- node[labeltext] {$P_{83}$} (w3);

\end{scope}

\begin{scope}[yshift=-2.7cm]
    \draw[rounded corners=6pt, fill=cyan!2, draw=cyan!40!black]
        (-1,1.3) rectangle (3,0.65);
    \node[labeltext] at (-0.8,0.95) {(c)};
    \node[labeltext, font=\small] at (1,0.95) {$\mathcal{C}^{(1)} \propto {\color{cyan!80!black}(c^{(1)}_2-id^{(1)}_8)(c^{(1)}_7-id^{(1)}_4)(c^{(1)}_5-id^{(1)}_6)}\times$};
    \node[labeltext, font=\small] at (1,0.7) {$\times{\color{olive!80!black}(c^{(1)}_1-id^{(1)}_2)(c^{(1)}_8-id^{(1)}_3)(c^{(1)}_6-id^{(1)}_7)(c^{(1)}_4-id^{(1)}_5)}$};
\end{scope}

\begin{scope}[yshift=-3.45cm]
    \draw[rounded corners=6pt, fill=magenta!2, draw=magenta!40!black]
        (-1,1.3) rectangle (3,0.7);
    \node[labeltext] at (-0.8,0.95) {(d)};
    \node[labeltext, font=\small] at (1,0.95) {${\color{olive!80!black}\sigma_{\mathrm{green}}(1,2,8,3,6,7,4,5) = (1,2,3,4,5,6,7,8)},$};
    \node[labeltext, font=\small] at (1,0.75) {${\color{cyan!80!black}\sigma_{\mathrm{blue}}(2,8,7,4,5,6,1,3) = (1,2,3,4,5,6,7,8)}.$};
\end{scope}
\end{tikzpicture}
    \caption{(a) Proper edge coloring of the interaction graph with $N=8$ vertices.
    (b) Subgraph of green and blue edges. Arrows denote the assigned edge orientations. With this ordering, all depicted stabilizers $P_{ij}$ commute and can be encoded in the same auxiliary register $\ell$.
    (c) Operator preparing the initial state on auxiliary modes $\ell=1$, associated with the blue and green stabilizers.
    (d) Permutations relevant for stabilizers associated with the blue and green colors, bringing them into an ordered form.
}
    \label{fig:coloring_example}
\end{figure}
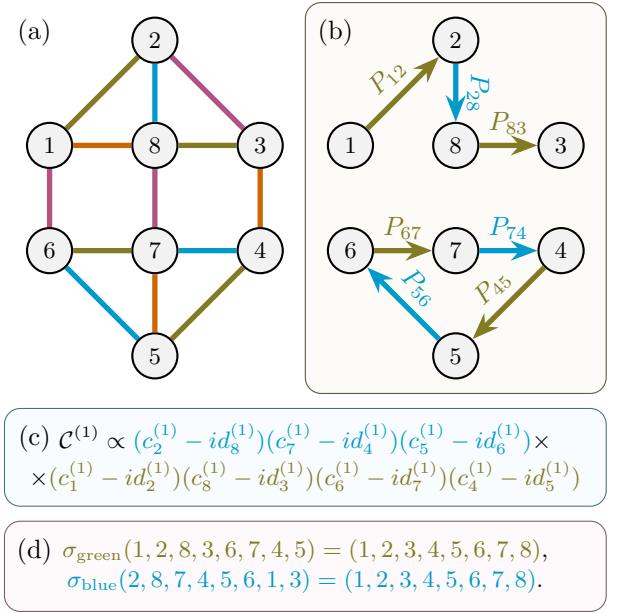
\cref{fig:coloring_example}~(a) depicts an interaction graph $\mathcal{G}$ for a hopping Hamiltonian as in \cref{eq:hop_on_graph}. In particular, it means that the model has a hopping interaction between modes $i,j$ only if the graph $\mathcal{G}$ has an edge connecting the respective vertices. 

In \cref{sec:contructing_encoding}, we saw that a necessary requirement for the initial state preparation is to appropriately group the stabilizers between the auxiliary modes, such that they all mutually commute. We achieve such grouping using the proper graph edge coloring - a coloring of edges such that each vertex only has edges of distinct colors. In particular, we are interested in the minimum number of colors required to achieve this - the chromatic index of a graph $\chi(\mathcal{G})$. In our example, the chromatic index is 4 (which is lower bounded by the maximum degree of any vertex). Note that the stabilizers associated with the edges of the same color trivially commute since they are supported on distinct vertices. Furthermore, in \cref{fig:coloring_example}~(b) we illustrate that under a correct ordering, the stabilizers of the edges of any two colors can be grouped together in a commuting set. Thus, in this example we require $\nu=2$ ancillary registers per site to encode all stabilizers, or more generally:
\begin{equation}
    \nu = \left \lceil{\frac{\chi(\mathcal{G})}{2}} \right \rceil.
\end{equation}
We denote a (now correctly ordered) stabilizer edge group associated with a color $\gamma$ (e.g. all the edges in blue) as:
\begin{equation}
    E_\gamma = \{(i_1,j_1),\dots,(i_{\Gamma_\gamma},j_{\Gamma_\gamma})\},
\end{equation}
where $\Gamma_{\gamma} = \abs{E_{\gamma}}$.

\paragraph{Full auxiliary state.} Second, given an appropriate stabilizer grouping, we discuss how to prepare the initial state. We define $\mathcal{C}^{(\ell)}$ as the operator for preparing the initial state associated with stabilizers on auxiliary modes $\ell$. We choose to group on the auxiliary register $\ell$ stabilizers associated with edges $E_{2\ell-1}$ and $E_{2\ell}$ (also see \cref{fig:coloring_example}~(c)):
\begin{align*}
    \mathcal{C}^{(\ell)}
     &= \left(\prod_{(i,j)\in E_{2\ell}} \frac{c_i^{(\ell)} - i d_j^{(\ell)}}{\sqrt{2}}\right)\!\!\!
    \left(\prod_{(i',j')\in E_{2\ell-1}} \frac{c_{i'}^{(\ell)} - i d_{j'}^{(\ell)}}{\sqrt{2}}\right)\!.
    \label{eq:C_l_def}
\end{align*}
Furthermore, the full auxiliary initial state is then:
\begin{equation}
    \ket{\Psi_{\aux}} = \prod_{\ell=1}^{\nu} \mathcal{C}^{(\ell)}\ket{0_{\aux}}.
\end{equation}

\paragraph{Permutation to an ordered state.} Third, to ensure an efficient preparation of the initial state, we exploit the generalized fermion permutations from \cite{maskara2025, Childs2025}. The aim here is to permute the arbitrary ordering from any $E_{\gamma}$ into one where the modes are ordered increasingly - $\{(1,2), (3,4), \dots, (2\Gamma_{\gamma-1}, 2\Gamma_{\gamma})\}$. Having the modes in this ordered form eliminates extensive \emph{JW strings} between the $c$ and $d$ Majorana operators in each term, allowing for an efficient state preparation routine that is independent of the system size $N$. For each color we define the permutation (also see \cref{fig:coloring_example}~(d)):
\begin{equation}
    \sigma_\gamma(i_1,j_1,\dots,i_{\Gamma_\gamma},j_{\Gamma_\gamma}) = (1,2,\dots,2\Gamma_\gamma).
\end{equation}
When writing this permutation it is implied that all $\nu$ fermion modes associated with site $i$ get permuted, while keeping their intrinsic order. In both \cite{maskara2025, Childs2025} it is shown how to implement the unitary $U_{\sigma_{\gamma}}$ associated with the permutation $\sigma_{\gamma}$ in $O(\log N)$ depth \cite{note_perm}.

\paragraph{Ordered state preparation.} Fourth, the last step in the initial state preparation routine is to show how to implement the ordered operator:
\begin{equation}
\label{eq:C_ord}    \mathcal{C}^{(\ell)}_{\mathrm{ord}} = \frac{1}{2^{\frac{N}{4}}}(c_1^{(\ell)}-id_2^{(\ell)})(c_3^{(\ell)}-id_4^{(\ell)})\times \dots \times (c_{N-1}^{(\ell)}-id_{N}^{(\ell)}).
\end{equation}
Having access to a circuit implementing this operation then allows us to prepare the full auxiliary initial state as:
\begin{equation}
    \ket{\Psi_{\aux}} = \prod_{\ell=1}^{\nu} \left (U_{\sigma_{2\ell-1}}\mathcal{C}^{(\ell)}_{\mathrm{ord}}U^{\dagger}_{\sigma_{2\ell-1}}U_{\sigma_{2\ell}}\mathcal{C}^{(\ell)}_{\mathrm{ord}}U^{\dagger}_{\sigma_{2\ell}}\right )\ket{0_{\aux}},
\end{equation}
where the permutation products $U^{\dagger}_{\sigma_{2\ell-1}}U_{\sigma_{2\ell}}$ can be further simplified into a single permutation. We discuss the specific circuit for the ordered state preparation $\mathcal{C}^{(\ell)}_{\mathrm{ord}}$ in \cref{app:ordered_state_prep}, where we show that:
\begin{equation}
    \mathrm{Depth}(\mathcal{C}^{(\ell)}_{\mathrm{ord}}) = \mathcal{O}(\log(\nu)),
\end{equation}
along with $N/2$ additional auxilliary qubits.

\section{Summary of the Time Dynamics Simulation}
\label{sec:time_dynamics}

In this section, we summarize the protocol for encoding sparse, non-local fermionic models and derive the cost of Trotterized time evolution. We consider a Hamiltonian $H_{\hop}$ (\cref{eq:hop_on_graph}) defined on a connectivity graph $\mathcal{G}$ with chromatic index $\chi$, which determines both the number of Trotter layers and the number of auxiliary modes per site. To implement the transformation, we introduce $\nu = \lceil \chi/2 \rceil$ auxiliary fermion modes per site. We first transform the physical Hamiltonian $H \rightarrow \tilde{H}$ and prepare the auxiliary modes in a state $\ket{\Psi_{\aux}}$. This construction ensures that, for all evolution times $T$,
\begin{equation*}
    e^{-i\tilde{H}T}\ket{\Psi_{\aux}}\ket{\Psi_{\phys}} = \ket{\Psi_{\aux}}e^{-iHT}\ket{\Psi_{\phys}}.
\end{equation*}
In \cref{tbl:scaling_ingredients} we summarize the cost of performing each step of the algorithm. The total circuit depth required for the transformation and preparation of $\ket{\Psi_{\aux}}$ is given by:
\begin{align}
    \mathrm{Depth}(\ket{\Psi_{\aux}}) &= 2\nu(\mathrm{Depth}(\mathcal{C}^{(\ell)}_{\mathrm{ord}})+\mathrm{Depth}(U_{\sigma_\gamma}))\nonumber\\
    &=\mathcal{O}\left(\chi\log(\chi)+\chi\log(N)\right)
\end{align}
and the approach requires $(2\nu+1) N$ ancillary qubits \cite{note_ancillas}.

\begin{table}[t]
    \centering
    \caption{Resource scaling of the individual steps of the algorithm. Fermion permutation constructions follow~\cite{maskara2025,Childs2025}.}
    \label{tbl:scaling_ingredients}
    \begin{tabular}{P{4.5cm} P{1.8cm} P{1.8cm}}
    \toprule
    Operation & Depth & Ancillas \\
    \midrule
    \multicolumn{3}{c}{\textsc{State preparation}} \\
    \addlinespace
    Introducing auxiliary fermions & $0$ & $\lceil\chi/2\rceil\,N$ \\
    Ordered state preparation $\mathcal{C}_{\ord}^{(\ell)}$ 
        & $\mathcal{O}(\log\chi)$ & $N/2$ \\
    Fermion permutation $U_{\sigma_\gamma}$ (with measurement)
        & $\mathcal{O}(\log(N))$ & $(\lceil\chi/2\rceil+1)N$ \\
    Fermion permutation $U_{\sigma_\gamma}$ (without measurement)
        & $\mathcal{O}(\log^2(\chi N))$ & $0$ \\
    \addlinespace
    \multicolumn{3}{c}{\textsc{Time evolution}} \\
    \addlinespace
    One-color Trotter layer $e^{-i\tau\tilde{h}_\alpha}$ 
        & $\mathcal{O}(\log\chi)$ & $0$ \\
    Full Trotter layer $\exp(-i\tilde{H}\tau)$ 
        & $\mathcal{O}(\chi\log\chi)$ & $0$ \\
    \bottomrule
    \end{tabular}
\end{table}

We approximate the time evolution using a first-order Trotter formula, and consider the cost of implementing $M$ applications of the unitary $W$:
\begin{align}
    W & = e^{-i \tilde{H} \tau} =\prod_{\alpha=1}^{\chi} e^{-i\tau\tilde{h}_\alpha} + \mathcal{O}( \Lambda  \tau^2)\\
\end{align}
where $\Lambda$ is a bound on the spectral norm of commutators, and scales as $\mathcal{O}(\chi^2N)$ (see \cref{app:Trotter_errors}) \cite{Childs_2021}.

\paragraph{Depth per Trotter step.}
To implement a single product $\prod_{\alpha=1}^{\chi} e^{-i\tau \tilde{h}_\alpha}$ on qubit hardware, we first note that, within a given $\tilde{h}_\alpha$, no two interaction terms overlap and can therefore be implemented in parallel. Moreover, each term in $\tilde{h}_\alpha$ has Pauli weight $O(\chi)$, and can thus be implemented in depth $\mathcal{O}(\log \chi)$ using the Pauli gadget technique \cite{Remaud_2025} (see \cref{sec:Pauli_gadget}). As a result,
\begin{equation}
    \mathrm{Depth}(e^{-i\tau \tilde{h}_\alpha}) = \mathcal{O}(\log(\chi)).
\end{equation}
Since the Hamiltonian is decomposed into $\chi$ sequential layers, the depth per Trotter step is $\mathrm{Depth}_{\mathrm{step}} = \mathcal{O}(\chi\log(\chi))$, yielding a total depth for implementing $M$ steps of Trotterized evolution of
\begin{align}
    \mathrm{Depth}(W^M) &= \mathrm{Depth}_{\init} + M\mathrm{Depth}_{\mathrm{step}}\\
    &=\mathcal{O}(\chi\log(\chi) + \chi\log( N)+M\chi\log(\chi))\nonumber.
\end{align}
\paragraph{Trotter Error.} Performing $M$ Trotter steps incurs an error of order $\mathcal{O}(\chi^2 N M \tau^2) = \mathcal{O}(\chi^2 N T^2 / M)$, where the total evolution time is $T = M\tau$. To ensure that the Trotter error remains small, it suffices to choose $M = \mathcal{O}(\chi^2 N T^2)$ steps for first-order Trotter evolution. By appropriately adjusting the value of $\tau$ in $W$, the same construction can be extended to higher-order Trotter formulae \cite{Childs_2021}. Finally, the value of $\Lambda$ is unchanged by our transformation (\cref{app:Trotter_errors}), yielding the same error as on fermionic hardware.

\paragraph{Asymptotic optimality.}
On ideal fermionic hardware, $M$ Trotter steps can be implemented with circuit depth
\begin{align}
    \mathrm{Depth}_{\mathrm{fermion}}(W^M) = \mathcal{O}(M\chi),
\end{align}
since the Hamiltonian must still be decomposed into $\chi$ non-overlapping terms. In the regime $M \gg \log N$, the ratio of the circuit depth on qubit hardware to that on fermionic hardware satisfies
\begin{equation}
    \frac{\mathrm{Depth}(W^M)}{\mathrm{Depth}_{\mathrm{fermion}}(W^M)} = \mathcal{O}(\log \chi),
\end{equation}
demonstrating that our encoding achieves the same asymptotic scaling for long-time dynamics as a fermionic quantum computer, up to a constant factor.

\section{Conclusions}
\label{sec:conclusions}

In this work, we established asymptotically optimal circuit depth for the simulation of sparse non-local fermionic Hamiltonians on qubit hardware. By introducing auxiliary fermionic modes and defining stabilizers that eliminate Jordan-Wigner strings from all even interaction terms, we map the Hamiltonian to one with terms of constant Pauli weight. The costliest component of our approach is the initialization of the auxiliary fermions in the mutual $+1$ eigenstate of all stabilizers, which can be achieved in $\mathcal{O}(\log(N))$ depth. While a single Trotter step therefore requires $\mathcal{O}(\log(N))$ depth, matching previous approaches, the auxiliary state remains invariant under time evolution, allowing each subsequent step to be implemented with depth independent of the system size. These results provide a practical route to efficient quantum simulation of sparse fermionic models and close the open gap by showing that fermionic quantum computation offers no asymptotic advantage in circuit depth, but only in requiring a constant number of ancillas per site.

\acknowledgments{We thank Sirui Lu for helpful discussions. We
acknowledge the support from the German Federal Ministry of Education and Research (BMBF) through FermiQP (Grant No. 13N15890) within the funding program quantum technologies—from basic research to market. This research is
part of the Munich Quantum Valley (MQV), which is supported by the Bavarian state government with funds from the Hightech Agenda Bayern Plus.
This work was partially supported by the Deutsche Forschungsgemeinschaft (DFG, German Research Foundation) under Germany's Excellence Strategy -- EXC-2111 -- 390814868; and by the EU-QUANTERA project TNiSQ (BA 6059/1-1).}
\medskip

\bibliography{main}

\appendix

\section{Detailed JW transformations}\label{app:JW_transformations}

In this appendix we show how to use the stabilizers $P^{(\ell)}_{ij}$ to cancel the long JW strings of non-local hopping and four-fermion interaction terms. The guiding idea is that we will work in the simultaneous $+1$ eigenspace of all relevant $P^{(\ell)}_{ij}$, so that multiplying Hamiltonian terms by these operators leaves all physical matrix elements unchanged while improving locality in the JW representation.

\subsection{Hopping term}

We begin with the hopping term between two arbitrary sites $i<j$. Using~\eqref{eq:JW_a_i} 
\begin{align}
    a_i^\dagger a_j + a_j^\dagger a_i
     &=
    \left(\prod_{n=i}^{j-1} \prod_{k=0}^{\nu} \Z_n^{(k)}\right)
    \frac{2\X^{(0)}_i \X^{(0)}_j
    + 2\Y^{(0)}_i \Y^{(0)}_j}{4}\nonumber\\
     &=
    \frac{1}{2}
    \left(\X^{(0)}_i \X^{(0)}_j
    + \Y^{(0)}_i \Y^{(0)}_j\right)
    \left(\prod_{n=i}^{j-1} \prod_{k=0}^{\nu} \Z_n^{(k)}\right).
    \label{eq:hop_JW}
\end{align}
This is the standard JW form of the hopping term: a two-qubit operator on the physical modes at sites $i$ and $j$ multiplied by a string of $\Z$ operators between $i$ and $j$.

We now consider the transformed operator
\begin{align}
     &(a_i^\dagger a_j + a_j^\dagger a_i)P^{(\ell)}_{ij}\nonumber\\
     &=
    \frac{i}{2}
    \left(\X^{(0)}_i \X^{(0)}_j
    + \Y^{(0)}_i \Y^{(0)}_j\right)
    \left(\prod_{n=i}^{j-1} \prod_{k=0}^{\nu} \Z_n^{(k)}\right)\nonumber\\
     &\quad\times
    \left(\prod_{n'=i}^{j-1} \prod_{k'=0}^{\nu} \Z_{n'}^{(k')}\right)
    \left(\prod_{m=0}^{\ell-1} \Z_i^{(m)} \Z_j^{(m)}\right)
    \X_i^{(\ell)} \Y_j^{(\ell)}.
\end{align}
Since all $\Z$ operators commute and $(\Z)^2 = 1$, the product of the two extensive JW strings collapses to the identity:

\begin{align}
     &(a_i^\dagger a_j + a_j^\dagger a_i)P^{(\ell)}_{ij}\nonumber\\
     &=
    \frac{i}{2}
    \left(\X^{(0)}_i \X^{(0)}_j
    + \Y^{(0)}_i \Y^{(0)}_j\right)
    \left(\prod_{m=0}^{\ell-1} \Z_i^{(m)} \Z_j^{(m)}\right)
    \X_i^{(\ell)} \Y_j^{(\ell)}\nonumber\\
     &=
    -\frac{i}{2}
    \left(\X^{(0)}_i \X^{(0)}_j
    + \Y^{(0)}_i \Y^{(0)}_j\right)
    \left(\prod_{m=1}^{\ell-1} \Z_i^{(m)} \Z_j^{(m)}\right)
    \X_i^{(\ell)} \Y_j^{(\ell)}.
    \label{eq:hop_transformed_local}
\end{align}
Expression~\eqref{eq:hop_transformed_local} is fully local: it acts only on the physical qubits at sites $i$ and $j$ and on the auxiliary qubits with indices $1,\dots,l$ at those sites. The Pauli weight of this operator is
\begin{align}
    w_{\hop}
     &= \underbrace{2}_{\text{physical }(0)\text{ at }i,j}
    + \underbrace{2(\ell-1)}_{\Z_i^{(m)},\Z_j^{(m)},\,1\le m\le \ell-1}
    + \underbrace{2}_{\X_i^{(\ell)},\Y_j^{(\ell)}}\nonumber\\
     &= 2(\ell+1).
\end{align}
Thus the transformation cancels the JW strings of length $\mathcal{O}(N)$, leaving a local term of bounded Pauli weight $w_{\hop}=2(\ell+1)$.

\subsection{Interaction terms}
\label{app:FH_interaction_transformation}
Here we consider Fermi--Hubbard (FH) type density--density interaction $n_in_j$ with $n_i = a_i^\dagger a_i$. For this term no transformation is necessary because $n_i$ is already local in the JW representation.
\begin{equation}
    n_i = a_i^\dagger a_i= \frac{1}{2}(1 + \Z^{(0)}_i).
\end{equation}
The FH density--density term $n_i n_j$ is therefore a product of single-site operators in the JW representation and is already local.

To tackle more general interaction terms, in particular chemistry-style four-fermion terms, we use the identity for $a_i^\dagger P^{(\ell)}_{ij} a_j$, which we now derive.

\subsection{Identity for $a_i^\dagger P^{(\ell)}_{ij} a_j$}
\label{app:aPa_id}
We consider the term $a_i^\dagger P^{(\ell)}_{ij} a_j$ 
\begin{align}
     &a_i^\dagger P^{(\ell)}_{ij} a_j\nonumber\\
     &=
    \left(S_i\,\frac{\X^{(0)}_i + i\Y^{(0)}_i}{2}\right)
    \left(-i\,S_i Z_{i,<\ell} \X^{(\ell)}_i S_j Z_{j,<\ell} \Y^{(\ell)}_j\right)\times\\
    &\times
    \left(S_j\,\frac{\X^{(0)}_j - i\Y^{(0)}_j}{2}\right)\nonumber\\
     &=
    -\frac{i}{4}
    S_i S_i S_j S_j
    (\X^{(0)}_i + i\Y^{(0)}_i)
    Z_{i,<\ell} \X^{(\ell)}_i\nonumber\\
     &\qquad\qquad
    \times Z_{j,<\ell} \Y^{(\ell)}_j
    (\X^{(0)}_j - i\Y^{(0)}_j)\\
    &=-\frac{i}{4}
    (Z_{i,<\ell} Z_{j,<\ell})
    \X^{(\ell)}_i \Y^{(\ell)}_j\nonumber\\
     &\qquad\times
    \left[\X^{(0)}_i \X^{(0)}_j
        + \Y^{(0)}_i \Y^{(0)}_j
        - i(\X^{(0)}_i \Y^{(0)}_j - \Y^{(0)}_i \X^{(0)}_j)\right].
    \label{eq:a_i_dag_P_a_j_final}
\end{align}
It will be the building block for transforming four-fermion (and higher) interaction terms. Note that the Pauli weight of \eqref{eq:a_i_dag_P_a_j_final} is:
\begin{equation}
    w_{a^\dagger Pa} = 2(\ell+1).
    \label{eq:w_aPa}
\end{equation}

\subsection{Transformation of four-fermion terms}
\label{app:four_fermion_trans}

We now consider a generic chemistry-style four-fermion interaction term of the form
\begin{equation}
    h_{ijkl}
    = a_i^\dagger a_j^\dagger a_k a_l
    + a_l^\dagger a_k^\dagger a_j a_i.
\end{equation}
We assume that the indices $i,j,k,l$ are such that the pairs $(i,k)$ and $(j,l)$ correspond to existing stabilizers (if they do not, we add the edges to the interaction graph).

We transform $h_{ijkl}$ by multiplying with $P^{(\ell)}_{ik}P^{(\ell')}_{jl}$:
\begin{align}
    h_{ijkl}
     &\longmapsto
    h_{ijkl}\,P^{(\ell)}_{ik}P^{(\ell')}_{jl}\nonumber\\
     &= \bigl(a_i^\dagger a_j^\dagger a_k a_l
    + a_l^\dagger a_k^\dagger a_j a_i\bigr)
    P^{(\ell)}_{ik}P^{(\ell')}_{jl}.
\end{align}
Reordering the fermionic operators and using the definition of $P^{(\ell)}_{ik}$ and $P^{(\ell')}_{jl}$, this can be written as
\begin{align}
    h_{ijkl}P^{(\ell)}_{ik}P^{(\ell')}_{jl}
     &=
    -(a_i^\dagger P^{(\ell)}_{ik} a_k)
    (a_j^\dagger P^{(\ell')}_{jl} a_l)\nonumber\\
     &\quad
    -\Bigl((a_i^\dagger P^{(\ell)}_{ik} a_k)
    (a_j^\dagger P^{(\ell')}_{jl} a_l)\Bigr)^\dagger.
    \label{eq:hijkl_transform_structure}
\end{align}

From \eqref{eq:a_i_dag_P_a_j_final} and \eqref{eq:w_aPa}, all operators in~\eqref{eq:hijkl_transform_structure} are local: they act only on the physical qubits at sites $i,j,k,l$ and on auxiliary qubits at sites $i,k$ up to level $\ell$ and at sites $j,l$ up to level $\ell'$. The Pauli weight of the transformed four-fermion term is thus just the two contributions of $w_{a^\dagger Pa}$ \eqref{eq:w_aPa}:
\begin{equation}
    w_{4\text{-body}} = 2(\ell+1)+2(\ell'+1) = 4+2\ell+2\ell'
\end{equation}

Regarding the choice of $\ell$ and $\ell'$: the derivation above holds for arbitrary auxiliary indices $\ell$ and $\ell'$. In the stabilizer construction detailed below, each pair $(i,k)$ and $(j,l)$ appearing in such a term is assigned some specific auxiliary layer $\ell$ or $\ell'$, and the auxiliaries are initialized in a joint $+1$ eigenspace of all corresponding $P_{ik}^{(\ell)}$ and $P_{jl}^{(\ell')}$ operators.

Here we show how this four-fermion interaction term is transformed to a qubit term. We define
\begin{align}
    A_{ik}^{(\ell)}  &= a_i^\dagger P^{(\ell)}_{ik} a_k,\\
    B_{jl}^{(\ell')} &= a_j^\dagger P^{(\ell')}_{jl} a_l,
\end{align}
so that~\eqref{eq:hijkl_transform_structure} becomes
\begin{align}
    h_{ijkl}P^{(\ell)}_{ik}P^{(\ell')}_{jl}
     &= -A_{ik}^{(\ell)} B_{jl}^{(\ell')}
    -(A_{ik}^{(\ell)} B_{jl}^{(\ell')})^\dagger.
\end{align}
Using~\eqref{eq:a_i_dag_P_a_j_final} for both $A_{ik}^{(\ell)}$ and $B_{jl}^{(\ell')}$, we obtain
\begin{align}
    A_{ik}^{(\ell)}
     &=
    -\frac{1}{4}
    (Z_{i,<\ell} Z_{k,<\ell})
    \X^{(\ell)}_i \Y^{(\ell)}_k\nonumber\\
     &\quad\times
    \left[\X^{(0)}_i \X^{(0)}_k
        + \Y^{(0)}_i \Y^{(0)}_k
        - i(\X^{(0)}_i \Y^{(0)}_k - \Y^{(0)}_i \X^{(0)}_k)\right],
\end{align}
and similarly
\begin{align}
    B_{jl}^{(\ell')}
     &=
    -\frac{1}{4}
    (Z_{j,<\ell'} Z_{l,<\ell'})
    \X^{(\ell')}_j \Y^{(\ell')}_l\nonumber\\
     &\quad\times
    \left[\X^{(0)}_j \X^{(0)}_l
        + \Y^{(0)}_j \Y^{(0)}_l
        - i(\X^{(0)}_j \Y^{(0)}_l - \Y^{(0)}_j \X^{(0)}_l)\right].
\end{align}
Their product is
\begin{align}
    A_{ik}^{(\ell)} B_{jl}^{(\ell')}
     &=
    \frac{1}{16}
    (Z_{i,<\ell} Z_{k,<\ell})
    (Z_{j,<\ell'} Z_{l,<\ell'})
    \X^{(\ell)}_i \Y^{(\ell)}_k
    \X^{(\ell')}_j \Y^{(\ell')}_l\nonumber\\
     &\quad\times
    \Bigl(   \mathcal{K}_{ik}
    - i\mathcal{L}_{ik}
    \Bigr)
    \Bigl(   \mathcal{K}_{jl}
    - i\mathcal{L}_{jl}
    \Bigr),
\end{align}
where we have defined the physical operators
\begin{align}
    \mathcal{K}_{ik}
     &= \X^{(0)}_i \X^{(0)}_k + \Y^{(0)}_i \Y^{(0)}_k,\\
    \mathcal{L}_{ik}
     &= \X^{(0)}_i \Y^{(0)}_k - \Y^{(0)}_i \X^{(0)}_k,
\end{align}
and similarly for $(j,l)$. Simplifying the above, we obtain:

\begin{align}
    &h_{ijkl}P^{(\ell)}_{ik}P^{(\ell')}_{jl}
     =
    -\frac{1}{8}
    (Z_{i,<\ell} Z_{k,<\ell})
    (Z_{j,<\ell'} Z_{l,<\ell'})\nonumber\\
    &\times\X^{(\ell)}_i \Y^{(\ell)}_k
    \X^{(\ell')}_j \Y^{(\ell')}_l
    \left[\mathcal{K}_{ik}\mathcal{K}_{jl}
        + \mathcal{L}_{ik}\mathcal{L}_{jl}\right].
    \label{eq:hijkl_transformed}
\end{align}
All operators in~\eqref{eq:hijkl_transformed} are local: they act only on the physical qubits at sites $i,j,k,l$ and on auxiliary qubits at sites $i,k$ up to level $\ell$ and at sites $j,l$ up to level $\ell'$.

\subsection{Arbitrary Interaction terms}
\label{app:arbitrary_interaction}
The same approach as in \cref{app:aPa_id} can further be used to eliminate the JW strings from any even fermion term by suitably introducing the stabilizers $P_{ij}^{(\ell)}$. The terms do not necessarily need to be number preserving, the stabilizers allow to eliminate the JW strings in any case. By equivalent arguments as above, a general interaction between $2n$ modes can be transformed to one with weight:
\begin{equation}
    w_{2n} = 2n+\sum_{i=1}^{n} \ell^{(i)},
\end{equation}
where $\ell^{(i)}$ is the auxiliary register on which the stabilizer is defined on.

\subsection{SYK terms}
\label{app:SYK_transform}
A term that appears in the SYK model (see \cref{app:SYK_model}) is the four-Majorana operator $\gamma^{(0)}_i \gamma^{(0)}_j \gamma^{(0)}_k \gamma^{(0)}_l$. Here each $\gamma_i^{(0)}$ can be either the $c^{(0)}_i$ or $d^{(0)}_i$ Majorana operator. We show that it can be transformed to a constant-weight Pauli term by multiplication with the stabilizers. We choose to transform it as:
\begin{equation}
    \gamma^{(0)}_i \gamma^{(0)}_j \gamma^{(0)}_k \gamma^{(0)}_l\xrightarrow[]{}\gamma^{(0)}_i \gamma^{(0)}_j \gamma^{(0)}_k \gamma^{(0)}_l P_{ij}^{(\ell)}P_{kl}^{(\ell')}.
\end{equation}
In principle, the stabilizer assignment is not unique and can be done between any pairs of indices (see discussion in \cref{app:graph_general_H}). However, this assignment does not change the fact that the weight of the Pauli term is independent of the system size. Thus, the transformed term is:
\begin{align}
    &\gamma^{(0)}_i \gamma^{(0)}_j \gamma^{(0)}_k \gamma^{(0)}_l P_{ij}^{(\ell)}P_{kl}^{(\ell')} =\gamma^{(0)}_i c^{(\ell)}_i d^{(\ell)}_j \gamma^{(0)}_j \gamma^{(0)}_k c^{(\ell')}_k d^{(\ell')}_l \gamma^{(0)}_l \nonumber \\&=
    \tilde{\gamma}^{(0)}_i Z_{i,<\ell}\X_i^{(\ell)} Z_{k,<\ell}(-\Y_j^{(\ell)}) \tilde{\gamma}^{(0)}_j \times \\& \times\tilde{\gamma}^{(0)}_k Z_{k,<\ell}\X_k^{(\ell')} Z_{l, <\ell'}(-\Y_l^{(\ell')}) \tilde{\gamma}^{(0)}_l,
\end{align}
where $\tilde{\gamma}_i$ denotes the Majorana representation without the JW string $\X_i^{(0)} (-\Y_i^{(0)})$ when $\gamma_i$ is $c_i^{(0)}$ $(d_i^{(0)})$ respectively. From this transformation one can see that the Pauli weight of the SYK term is:
\begin{equation}
    w_{\mathrm{SYK}} = 4+2\ell+2\ell'.
\end{equation}

\section{Stabilizer Properties}
\label{app:stabilizer_properties}

Here we summarize the relevant identities for the stabilizers $P_{ij}^{(\ell)}$. We saw in \cref{eq:Pij_def} that the stabilizers are defined as:
\begin{equation*}
    P_{ij}^{(\ell)} = ic_i^{(\ell)}d_j^{(\ell)}.
\end{equation*}

Firstly, we show that the stabilizers square to an identity:
\begin{equation}
    P_{ij}^{(\ell)2} = ic_i^{(\ell)}d_j^{(\ell)}ic_i^{(\ell)}d_j^{(\ell)} = c_i^{(\ell)}c_i^{(\ell)}d_j^{(\ell)}d_j^{(\ell)}=\Id,
\end{equation}
where we have used the Majorana algebra relations \cref{eq:majorana_id}. Since it squares to an identity, this implies that the eigenvalues of the stabilizer are $\pm 1$. 
The respective eigenstates are:
\begin{equation}
\label{eq:app_stab_eigenstates}
    P_{ij}^{(\ell)}\frac{c_i^{(\ell)} \mp id_j^{(\ell)}}{\sqrt{2}}\ket{0_{\aux}} = \pm1 \frac{c_i^{(\ell)} \mp id_j^{(\ell)}}{\sqrt{2}}\ket{0_{\aux}}.
\end{equation}
Secondly, we verify that:
\begin{equation}
    P_{ij}^{(\ell)\dagger} = (ic_i^{(\ell)}d_j^{(\ell)})^\dagger = -id_j^{(\ell)}c_i^{(\ell)} = ic_i^{(\ell)}d_j^{(\ell)} = P_{ij}^{(\ell)},
\end{equation}
the stabilizers are indeed hermitian. Once again we have used the \cref{eq:majorana_id}.

Third, we want to formally establish when different stabilizers commute:
\begin{align}
\label{eq:P_comm}
    &\comm{P_{ij}^{(\ell)}}{P_{kl}^{(\ell')}} = \comm{c_k^{(\ell')}d_l^{(\ell')}}{c_i^{(\ell)}d_j^{(\ell)}} \nonumber\\&= c_i^{(\ell)}c_k ^{(\ell')}d_j^{(\ell)}d_l^{(\ell')} -c_k ^{(\ell')}c_i^{(\ell')}d_l^{(\ell')}d_j^{(\ell)}\nonumber\\
    &=c_i^{(\ell)}c_k ^{(\ell')}d_j^{(\ell)}d_l^{(\ell')} - (2\delta_{ik}\delta_{\ell\ell'}-c_i^{(\ell)}c_k ^{(\ell')})(2\delta_{jl}\delta_{\ell\ell'}-d_j^{(\ell)}d_l^{(\ell')})\nonumber\\
    &=2\delta_{\ell\ell'}(c_i^{(\ell)}c_k ^{(\ell)}\delta_{jl}+\delta_{ik}d_j^{(\ell)}d_l^{(\ell)}-2\delta_{ik}\delta_{jl}),
\end{align}
where going to the third line we have used \cref{eq:majorana_id}.
The \cref{eq:P_comm} shows that if $\ell\neq\ell'$, the stabilizers always commute, as well as when both $i\neq k, j\neq l$ on the same auxiliary register.

\section{Grouping of the stabilizers using graph coloring}
\label{app:stabilizer_grouping}

In this section, we focus on the task of generating a set of commuting stabilizers $P^{(\ell)}_{ij}$, given an arbitrary Hamiltonian.

\subsection{Constructing the interaction graph}
\label{app:graph_general_H}

Firstly, it is trivial to generate a hopping Hamiltonian. Given a Hamiltonian $H_{\mathrm{\hop}}$ on $N$ modes we construct a graph $\mathcal{G} = (V,E)$ with $\abs{V} = N$. The graph $\mathcal{G}$ encodes the locality pattern: we assume that $(i,j)\in E$ exactly when the hopping term between sites $i$ and $j$ is present ($t_{ij}\neq0$).

However, one might also be interested in higher order interaction terms like the four-fermion Hamiltonian:
\begin{equation}
    H_{4} = \sum_{ijkl} U_{ijkl}a_i^\dagger a_j^\dagger a_k a_l.
\end{equation}
Since this is a four-body interaction, there is no notion of an unique interaction edge. However, as we saw from \cref{app:four_fermion_trans,app:arbitrary_interaction} any even fermionic term on $2n$ modes can be simplified using $n$ stabilizers. In the case of these higher order interactions there is freedom on choosing the stabilizers we wish to use. For example, when simplifying the four-fermion term:
\begin{equation*}
    a_i^\dagger a_j^\dagger a_k a_l,
\end{equation*}
one can use stabilizers between edges in various combinations. In particular, there are $\binom{4}{2}=6$ different ways to choose between which vertices to insert the stabilizers. Just to illustrate two of them, both multiplication with $P_{ik}P_{jl}$ or $P_{il}P_{kj}$ lead to a local qubit term. 

Thus, to incorporate terms like $H_4$ in the interaction graph $\mathcal{G}$ we:
\begin{enumerate}
    \item pick one stabilizer configuration that simplifies the higher order terms
    \item for each stabilizer $P_{ij}^{(\ell)}$ that we require, we add an edge $(i,j)$ to $E$, if it is not already present.
\end{enumerate}
This strategy works for creating an appropriate (albeit not unique) interaction graph. We do not concern ourselves with finding the best possible solution here. In the next section, we assume that we are given an instance of such an interaction graph and show how to group the stabilizers across the minimal number of auxiliary modes, ensuring that they all mutually commute.

\subsection{Edge coloring and auxiliary layers}
\label{app:edge_coloring}

To assign commuting stabilizers, we use a proper \emph{edge coloring} of the graph $\mathcal{G}$. An edge coloring is a map
\begin{equation}
    \mathrm{col}:E\to\{1,\dots,\chi\}
\end{equation}
such that no two edges incident on the same vertex share the same color. The minimum number $\chi$ of colors for which such a coloring exists is the \emph{chromatic index} (or edge chromatic number) of $\mathcal{G}$.

For each color $\gamma\in\{1,\dots,\chi\}$ we denote by $E_{\gamma}\subseteq E$ the set of edges of color $\gamma$,
\begin{equation}
    E_{\gamma} = \{(i,j)\in E\mid \mathrm{col}(i,j) = \gamma\}.
\end{equation}
Because $\mathrm{col}$ is a proper edge coloring, each $E_{\gamma}$ is a \emph{matching}: no two edges in $E_{\gamma}$ share a common endpoint.

We first imagine that we have at least one auxiliary mode per color, that is, $\nu\ge \chi$, and that we temporarily assign a distinct auxiliary index $\ell=\gamma$ to each color class $E_{\gamma}$. For each edge $(i,j)\in E_{\gamma}$ we choose an arbitrary orientation and define a stabilizer
\begin{equation}
    P_{ij}^{(\gamma)} = i\,c_i^{(\gamma)} d_j^{(\gamma)}.
\end{equation}
The following lemma makes the commutation properties of each color class precise.

\begin{lemma}\label{lemma:P_commute}
    Fix a color $\gamma$ and an auxiliary index $\ell$. Let
    \begin{equation}
        \mathcal{P}_{\gamma}^{(\ell)} = \{P^{(\ell)}_{ij}\}_{(i,j)\in E_{\gamma}}
    \end{equation}
    be the set of operators defined as above from the edges of $E_{\gamma}$ and an arbitrary orientation. Then the operators in $\mathcal{P}_{\gamma}^{(\ell)}$ mutually commute.
\end{lemma}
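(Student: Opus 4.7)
The plan is to invoke the matching structure forced by a proper edge coloring and then read off commutation from the commutator identity already derived in \cref{eq:P_comm}. First I would record the combinatorial fact that since $\mathrm{col}$ is a proper edge coloring, the color class $E_{\gamma}$ is a \emph{matching}: for any two distinct edges $(i,j),(k,l)\in E_{\gamma}$ the vertex sets are disjoint, so $\{i,j\}\cap\{k,l\}=\emptyset$, hence in particular $i\neq k$ and $j\neq l$.

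Next I would plug this into the general commutator identity derived in \cref{app:stabilizer_properties}, specialized to the same auxiliary index $\ell'=\ell$:
\begin{equation*}
    [P^{(\ell)}_{ij},P^{(\ell)}_{kl}] = 2\bigl(\delta_{jl}\,c^{(\ell)}_i c^{(\ell)}_k + \delta_{ik}\,d^{(\ell)}_j d^{(\ell)}_l - 2\delta_{ik}\delta_{jl}\bigr).
\end{equation*}
Since the matching property gives $\delta_{ik}=\delta_{jl}=0$, every term on the right-hand side vanishes, and therefore $[P^{(\ell)}_{ij},P^{(\ell)}_{kl}]=0$ for all distinct $(i,j),(k,l)\in E_{\gamma}$. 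The case $(i,j)=(k,l)$ is trivial since any operator commutes with itself.

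Finally I would remark that the argument is insensitive to the chosen orientation of each edge. Flipping the orientation of $(i,j)$ to $(j,i)$ replaces $P^{(\ell)}_{ij}=ic^{(\ell)}_id^{(\ell)}_j$ by $P^{(\ell)}_{ji}=ic^{(\ell)}_jd^{(\ell)}_i$, which simply relabels which endpoint carries $c$ and which carries $d$; because the matching property keeps all four involved vertices distinct regardless of labeling, the same $\delta$-symbols vanish in the commutator identity. Thus the entire set $\mathcal{P}^{(\ell)}_{\gamma}$ is mutually commuting for any orientation choice.

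There is no real obstacle here: the only thing to be careful about is that the identity in \cref{eq:P_comm} checks only $\delta_{ik}$ and $\delta_{jl}$, so one might worry about the ``crossed'' coincidences $i=l$ or $j=k$; but the matching property rules these out as well, since they would force a shared vertex between two distinct edges of the same color, contradicting properness of the edge coloring.
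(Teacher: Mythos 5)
Your proof is correct and follows essentially the same route as the paper's: observe that a proper edge coloring makes $E_{\gamma}$ a matching, so distinct edges are vertex-disjoint, and then conclude from the commutator identity in \cref{eq:P_comm} that the stabilizers commute. Your extra remarks on orientation independence and the ``crossed'' coincidences $i=l$, $j=k$ are sound (and in fact \cref{eq:P_comm} already shows those coincidences are harmless, since $c$ and $d$ Majoranas anticommute for all index pairs), but they do not change the argument.
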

\begin{proof}
    Note that $E_{\gamma}$ is a matching, so any two distinct edges $(i,j)$ and $(k,l)$ in $E_{\gamma}$ are vertex-disjoint:
    \begin{equation}
        \{i,j\}\cap\{k,l\} = \emptyset.
    \end{equation}
    Therefore, from \cref{eq:P_comm} the respective stabilizers commute. Since this holds for any pair of distinct operators in $\mathcal{P}_{\gamma}^{(\ell)}$, all of them mutually commute.
\end{proof}

Next we show that we can \emph{reuse} ancillas across colors: stabilizers from any two colors can be encoded in the same auxiliary index $\ell$ while retaining commutativity. This allows us to reduce the number of auxiliary modes per site by almost a factor of two.

\begin{lemma}\label{lemma:half_ancillas}
    Let $\gamma$ and $\gamma'$ be two colors with edge sets $E_{\gamma}$ and $E_{\gamma'}$. Consider the graph
    \begin{equation}
        \mathcal{G}_{\gamma\gamma'} = (V, E_{\gamma}\cup E_{\gamma'}).
    \end{equation}
    Then there exists an orientation of all edges in $E_{\gamma}\cup E_{\gamma'}$ such that, if we fix a single auxiliary index $\ell$ and define
    \begin{equation}
        P^{(\ell)}_{ij} = i c_i^{(\ell)} d_j^{(\ell)}
    \end{equation}
    for every oriented edge $(i,j)\in E_{\gamma}\cup E_{\gamma'}$, all these stabilizers commute mutually. In particular, the two color classes $E_{\gamma}$ and $E_{\gamma'}$ can be represented using the same layer $\ell$ of auxiliary fermions.
\end{lemma}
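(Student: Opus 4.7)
The plan is to translate the commutation constraint into a combinatorial condition on edge orientations, and then exhibit such an orientation by exploiting the structure of $\mathcal{G}_{\gamma\gamma'}$. From \cref{eq:P_comm}, two stabilizers $P_{ij}^{(\ell)}$ and $P_{kl}^{(\ell)}$ placed on the same auxiliary register fail to commute only when $i=k$ (shared \emph{source}, producing a residual $d_j^{(\ell)} d_l^{(\ell)}$) or when $j=l$ (shared \emph{target}, producing $c_i^{(\ell)} c_k^{(\ell)}$); the cross case in which a source of one edge coincides with a target of another ($i=l$ or $j=k$) still yields a vanishing commutator because neither $\delta_{ik}$ nor $\delta_{jl}$ fires. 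Hence the lemma reduces to the purely combinatorial problem of orienting every edge of $E_\gamma \cup E_{\gamma'}$ so that no vertex is the source of more than one edge, and no vertex is the target of more than one edge.

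The second step is a structural observation. Since $E_\gamma$ and $E_{\gamma'}$ are both matchings (they are color classes of a proper edge coloring), every vertex has degree at most two in $\mathcal{G}_{\gamma\gamma'}$. The graph is therefore a disjoint union of isolated vertices, simple paths, and simple cycles, and since colors must alternate along any walk in $\mathcal{G}_{\gamma\gamma'}$, every cycle has even length. On each connected component I would impose a consistent orientation: orient each path from one endpoint toward the other, and orient each cycle cyclically. With this choice every internal vertex of a path and every vertex of a cycle has exactly one incoming and one outgoing edge, while the endpoints of each path are incident to only one edge and therefore contribute no conflict.

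With the orientation in hand, no vertex is the source of two distinct edges and no vertex is the target of two distinct edges, so the criterion extracted in the first step guarantees that all stabilizers $P_{ij}^{(\ell)}$ with $(i,j) \in E_\gamma \cup E_{\gamma'}$ mutually commute on a single auxiliary layer $\ell$. The main subtlety, which I would treat most carefully upfront, is verifying from \cref{eq:P_comm} that the cross-vertex case genuinely commutes despite the anticommutation of $c^{(\ell)}$ and $d^{(\ell)}$ on the shared site; once this is in place, the graph-theoretic construction and the ensuing verification are immediate, and the $\nu = \lceil \chi/2 \rceil$ count of auxiliary layers announced in the main text follows by pairing up the $\chi$ color classes two at a time.
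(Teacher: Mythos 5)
Your proposal is correct and follows essentially the same route as the paper: decompose $\mathcal{G}_{\gamma\gamma'}$ into paths and cycles using the degree-two bound from the two matchings, orient each component consistently so every vertex has at most one outgoing and one incoming edge, and conclude commutativity from \cref{eq:P_comm}. Your explicit verification that the cross case ($i=l$ or $j=k$) commutes because only $\delta_{ik}$ and $\delta_{jl}$ appear in the commutator is exactly the point the paper compresses into ``products of distinct Majorana terms,'' so the two arguments coincide.
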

\begin{proof}
    Because $E_{\gamma}$ and $E_{\gamma'}$ are both matchings, the degree of any vertex in $\mathcal{G}_{\gamma\gamma'}$ is at most two:
    \begin{equation}
        \deg_{\mathcal{G}_{\gamma\gamma'}}(v) \le 2\quad\text{for all }v\in V.
    \end{equation}
    Therefore, $\mathcal{G}_{\gamma\gamma'}$ is a disjoint union of paths and cycles. On each connected component (path or cycle) we choose an orientation that is \emph{consistent along the component}: for a path $(v_1,v_2,\dots,v_r)$ we orient all edges as
    \begin{equation}
        v_1 \to v_2,\quad v_2 \to v_3,\quad \dots,\quad v_{r-1} \to v_r,
    \end{equation}
    and for a cycle $(v_1,v_2,\dots,v_r,v_1)$ we orient all edges, for example, as
    \begin{equation}
        v_1 \to v_2,\quad v_2 \to v_3,\quad \dots,\quad v_{r-1} \to v_r,\quad v_r \to v_1.
    \end{equation}
    With this choice, at every vertex $v$ there is at most one outgoing and at most one incoming edge in $E_{\gamma}\cup E_{\gamma'}$. For each, now oriented, edge $(i,j)$ in $E_{\gamma}\cup E_{\gamma'}$, we now define the stabilizer $P^{(\ell)}_{ij} = i c_i^{(\ell)} d_j^{(\ell)}$. Due to the orientation of the operators all $P^{(\ell)}_{ij}$ are products of distinct Majorana terms, hence the elements in the set
    \begin{equation}
        \mathcal{P}_{\gamma\gamma'}^{(\ell)}
        = \{P^{(\ell)}_{ij}\}_{(i,j)\in E_{\gamma}\cup E_{\gamma'}}
    \end{equation}
    commute pairwise (see \cref{eq:P_comm}).
\end{proof}

As an immediate corollary, we can pack at most two colors into each auxiliary layer without destroying commutativity. If $\chi$ is the chromatic index of $\mathcal{G}$, we can therefore choose
\begin{equation}
    \nu = \left\lceil \frac{\chi}{2} \right\rceil
\end{equation}
auxiliary modes per site, and assign colors to layers as
\begin{equation}
    \text{layer }\ell\text{ carries colors }2\ell-1\text{ and }2\ell,
\end{equation}
with the understanding that if $2\ell>\chi$ then color $2\ell$ is absent. For each layer $\ell$ we orient all edges of colors $2\ell-1$ and $2\ell$ as in Lemma~\ref{lemma:half_ancillas}, and define stabilizers $P_{ij}^{(\ell)}$ for all such oriented edges.

\subsection{Regular graph properties}
\label{app:regular_graphs}
A relevant example of sparse interactions graphs are the $d-$regular graphs. A graph $\mathcal{G} = (V,E)$ is $d-$ regular if each $v\in V$ belong to exactly $d$ edges in $E$. 

For regular graphs, it is possible to bound their chromatic index $\chi$ using the Vizing's theorem \cite{vizing1964estimate}:
\begin{equation}
    d\leq\chi \leq d+1.
\end{equation}

\section{Ordered state circuit construction}
\label{app:ordered_state_prep}

In \cref{sec:aux_state_preparation} we saw that the initial state can be constructed by application of $\mathcal{C}^{(\ell)}$. To implement this operation, we first permute the fermions to their ordered form with $U_{\sigma_{2\ell}}$ (or $U_{\sigma_{2\ell-1}}$), followed by an \emph{ordered} state preparation $\mathcal{C}_{\mathrm{ord}}^{(\ell)}$ (see \cref{eq:C_ord}). We discuss the implementation of $\mathcal{C}_{\mathrm{ord}}^{(\ell)}$ here.

\subsection{Mapping to a unitary}
\label{sec:unitary_ord}

First, note that the operation $\frac{c^{(\ell)}_i-id^{(\ell)}_j}{\sqrt{2}}$ is not unitary. To embed it into a unitary we introduce an additional auxiliary qubit:
\begin{align}
    \mathcal{U}^{(\ell)}_{ij} = \ketbra{0}_{\anc}\otimes c^{(\ell)}_i+\ketbra{1}_{\anc}\otimes (-id^{(\ell)}_j).
\end{align}

One can verify that 
\begin{align*}
&\mathcal{U}^{(\ell)\dagger}_{ij}\mathcal{U}^{(\ell)}_{ij} = \mathcal{U}^{(\ell)}_{ij}\mathcal{U}^{(\ell)\dagger}_{ij}= \\&= \ketbra{0}_{\anc}\otimes c^{(\ell)2}_i+\ketbra{1}_{\anc}\otimes d^{(\ell)2}_j\\&=\ketbra{0}_{\anc}\otimes \Id + \ketbra{1}_{\anc} \otimes \Id = \Id,
\end{align*}
$\mathcal{U}^{(\ell)}_{ij}$ is indeed unitary. To prepare the desired eigenstate of $P_{ij}^{(\ell)}$ we perform:
\begin{align}
    &\Had_{\anc} \mathcal{U}^{(\ell)}_{ij} \Had_{\anc} \ket{0_{\anc}}\ket{\Psi_{\aux}}=\\&= \Had_{\anc}\left(\frac{\ket{0_{\anc}}}{\sqrt{2}}c_i^{(\ell)}-\frac{\ket{1_{\anc}}}{\sqrt{2}}id_j\right)\ket{\Psi_{\aux}}\nonumber\\
    &=\left[\frac{\ket{0_{\anc}}}{2}(c_i^{(\ell)}-id_j^{(\ell)})+\frac{\ket{1_{\anc}}}{2}(c_i^{(\ell)}+id_j^{(\ell)})\right]\ket{\Psi_{\aux}}\nonumber.
\end{align}
Following this circuit, we measure the ancillary register:
\begin{itemize}
    \item $\ket{0_{\anc}}$ indicates that we have a $+1$ eigenstate of $P_{ij}^{(\ell)}$
    \item $\ket{1_{\anc}}$ indicates that we have a $-1$ eigenstate of $P_{ij}^{(\ell)}$
\end{itemize}
which can be easily verified from \cref{eq:app_stab_eigenstates}. One option to ensure that we have the correct eigenstate would be to post-select the outcome on $\ket{0_{\anc}}$, however we argue that it isn't even necessary. One can simply notice that:
\begin{equation}
    (-P_{ij}^{(\ell)})\frac{c^{(\ell)}_i+id^{(\ell)}_j}{\sqrt{2}}\ket{0_{\aux}} = \frac{c^{(\ell)}_i+id^{(\ell)}_j}{\sqrt{2}}\ket{0_{\aux}}.
\end{equation}
Hence, the outcome $\ket{1_{\anc}}$ also indicates that we have prepared the $+1$ eigenstate of the $-P_{ij}^{(\ell)} = -ic^{(\ell)}_id^{(\ell)}_j$. Thus, in the case of the $\ket{1_{\anc}}$ outcome we can simply redefine the respective stabilizer to be $P_{ij}^{(\ell)}\xrightarrow[]{} -ic^{(\ell)}_id^{(\ell)}_j$. Note that this simple change of sign does not affect the commutation properties of the stabilizers, nor the weight of the transformed Hamiltonian terms. It will only affect the sign in front of the terms, whose transformation feature the respective stabilizer.

\subsection{Circuit decomposition of the ordered state preparation}

Having showed how we can initialize the respective state, let us show how we can prepare an efficient circuit in the ordered case. To try that, let us first try to simplify $\mathcal{C}^{(\ell)}_{\ord}$ (\cref{eq:C_ord}) in the qubit picture:
\begin{align}
\label{eq:C_ord_simpl_1}
    &2^{\frac{N}{4}}\mathcal{C}^{(\ell)}_{\ord}= \prod_{k=1}^{N/2}(S_{2k-1}Z_{2k-1,<\ell} \X^{(\ell)}_{2k-1}+iS_{2k}Z_{2k,<\ell} \Y^{(\ell)}_{2k})\nonumber\\
    &=\prod_{k=1}^{N/2}S_{2k-1}(Z_{2k-1,<\ell} \X^{(\ell)}_{2k-1}+iZ_{2k-1,\leq\nu}Z_{2k,<\ell} \Y^{(\ell)}_{2k})\\
    &=\prod_{k=1}^{N/2}(Z_{2k-1,<\ell} \X^{(\ell)}_{2k-1}+iZ_{2k-1,\leq\nu}Z_{2k,<\ell} \Y^{(\ell)}_{2k})\prod_{k=1}^{N/2}S_{2k-1}.\nonumber
\end{align}
Notice that in the first product, the terms are local and only supported on the sites $2k-1,2k$. To fully simplify it to a product of local contributions, we need to simplify the product $\prod_{k=1}^{N/2}S_{2k-1}$:
\begin{align}
    \prod_{k=1}^{N/2}S_{2k-1} = \prod_{k=1}^{N/2}(Z_{2k-1,\leq\nu})^{N/2+1-k}(Z_{2k,\leq\nu})^{N/2-k},
\end{align}
where we have used the fact that any $\Z^2=1$. Plugged back into \cref{eq:C_ord_simpl_1} it yields:
\begin{align}
    \label{eq:C_ord_simpl_2}
    &2^{\frac{N}{4}}\mathcal{C}^{(\ell)}_{\ord}=\prod_{k=1}^{N/2}\Bigg(Z_{2k-1,<\ell} \X^{(\ell)}_{2k-1}\nonumber\\&+iZ_{2k-1,\leq\nu}Z_{2k,<\ell} \Y^{(\ell)}_{2k}\Bigg)\prod_{k=1}^{N/2}S_{2k-1}\nonumber\\
    &=\prod_{k=1}^{N/2} \Bigg(\X^{(\ell)}_{2k-1}Z_{2k-1,<\ell}  (Z_{2k-1,\leq\nu})^{N/2+1-k}\nonumber\\&+i\Y^{(\ell)}_{2k}Z_{2k-1,\leq\nu}Z_{2k,<\ell}(Z_{2k,\leq\nu})^{N/2-k}\Bigg)\nonumber\\
    &=\prod_{k=1}^{N/2} \Bigg(\tilde{c}_{2k-1}^{(\ell)}-i\tilde{d}_{2k}^{(\ell)}\Bigg),
\end{align}
where at the end we have introduced simplified, transformed operators:
\begin{align}
    \label{eq:tilde_ops}
    \tilde{c}^{(\ell)}_{2k-1}&=\X^{(\ell)}_{2k-1}Z_{2k-1,<\ell}  (Z_{2k-1,\leq\nu})^{N/2+1-k},\\
    \tilde{d}^{(\ell)}_{2k}&=-\Y^{(\ell)}_{2k}Z_{2k-1,\leq\nu}Z_{2k,<\ell}(Z_{2k,\leq\nu})^{N/2-k}.
\end{align}
Notice that in \cref{eq:C_ord_simpl_2}, the $\mathcal{C}^{(\ell)}_{\ord}$ has the same form as originally in \cref{eq:C_ord}, just now with $\tilde{c}_i^{(\ell)}, \tilde{d}_j^{(\ell)}$ that have simplified, local \emph{JW strings}. Furthermore, notice that the Pauli weights of these terms are:
\begin{align}
    \label{eq:weight_tilde_ops}
    w(\tilde{c}^{(\ell)}_{2k-1})&= \min(\ell, \nu-\ell)\leq \nu,\\
    w(\tilde{d}^{(\ell)}_{2k})&= \min(\ell+\nu, 2\nu-\ell)\leq 2\nu.
\end{align}
Thus, we can proceed to define the circuit $\mathcal{U}_{\ord}$ that will implement $\mathcal{C}^{(\ell)}_{\ord}$ as:
\begin{align}
    \label{eq:U_ord_circ}
    \mathcal{U}_{\ord} &= \prod_{k=1}^{N/2}\tilde{\mathcal{U}}_k,\\
    \text{where}\;&\tilde{\mathcal{U}}_k = \ketbra{0}_{\anc,k}\tilde{c}^{(\ell)}_{2k-1}+\ketbra{1}_{\anc,k}(-i\tilde{d}^{(\ell)}_{2k}).
\end{align}
Note that to implement this, we need to introduce additional $N/2$ ancillary qubits. Furthermore, the key of this transformation is that different $\tilde{\mathcal{U}}_k$ act on disjoint qubits. Thus, we can implement all $\tilde{\mathcal{U}}_k$ in parallel.

Finally, to establish the circuit depth required to prepare the ordered auxilliary state we need to establish the depth required to implement any $\tilde{\mathcal{U}}_k$. To do that, notice the structure of the terms in \cref{eq:tilde_ops}. It consists of a product of $\Z$ operators, along with a $\X$ ($i\Y$) Pauli operator. The circuit $\tilde{\mathcal{U}}_k$ is simply their controlled version. Thus, the most expensive part of this circuit will be to implement the control-$\Z$ ($\CZ$) operations, controlled by the ancillary register, acting on all the relevant $\Z$ operators. Importantly, such circuits are called $\CZ$ \emph{fanout} circuits. It has been shown, that a $\CZ$ fanout on $n$ qubits can be implemented in a $\log(n)$ circuit depth \cite{Remaud_2025, moore1999quantumcircuitsfanoutparity, fang2003quantumlowerboundsfanout}. The circuit for it would look like:
\begin{align}
    \tilde{\mathcal{U}}_k &= \X_{\anc} \CX^{(\ell)}_{\anc,2k-1} \prod_{t \in \mathcal{Z}_k^c} \left(\CZ_{\anc, t}\right) \X_{\anc}\times \\
    &\times \mathrm{C}(i\Y^{(\ell)})_{\anc,2k} \prod_{t \in \mathcal{Z}_k^d} \left(\CZ_{\anc, t}\right),
\end{align}
where the sets $\mathcal{Z}_k^c$ ($\mathcal{Z}_k^d$) contain the information of the respective target $\Z$ operator indices from \cref{eq:tilde_ops} that are involved in the \emph{fanout}. Note that for implementation $i\Y^{(\ell)}_{2k} = \Z^{(\ell)}_{2k}\X^{(\ell)}_{2k}$.
Thus, also $\tilde{\mathcal{U}}_k$ can be implemented in $\mathcal{O}(\log(\nu))$ depth. Note that one needs to take the same steps with regards to measuring the ancilla and adjusting the sign of the stabilizer as in \cref{sec:unitary_ord}. To sum it up, the operation $\mathcal{C}^{(\ell)}_{\ord}$ can be embedded into a unitary operation and realized as a quantum circuit of depth $\mathcal{O}(\log(\nu))$ along with $N/2$ additional ancillas.

\section{Trotter Implementation}
\label{app:Trotter_errors}

\subsection{Commutator bounds}

Consider the commutator bounds on the transformed Hamiltonian:
\begin{equation}
    \comm{\tilde{h_{ij}}}{\tilde{h_{kl}}} = \comm{h_{ij}P^{(\ell)}_{ij}}{h_{kl}P_{kl}^{(\ell')}} = \comm{h_{ij}}{h_{kl}}P^{(\ell)}_{ij}P_{kl}^{(\ell')},
\end{equation}
since the projectors commute with themselves (our construction) and the physical operators. This also straightforwardly extends to higher order fermion interactions. Thus,
\begin{align}
    \norm{\comm{\tilde{h_{ij}}}{\tilde{h_{kl}}}}_2 &= \norm{\comm{h_{ij}}{h_{kl}}P^{(\ell)}_{ij}P_{kl}^{(\ell')}}_2 = \norm{\comm{h_{ij}}{h_{kl}}}_2,
\end{align}
since $P_{ij}^{(\ell)\dagger}P_{ij}^{(\ell)}= \Id$.

The total nested commutator for the first-order Trotter evolution is:
\begin{align}
    \Lambda &= \sum_{\gamma = 1}^{\chi} \sum_{\delta = \gamma+1}^\chi \norm{\comm{\tilde{h}_{\gamma}}{\tilde{h}_{\delta}}}_2 = \\
    &=\sum_{\gamma = 1}^{\chi} \sum_{\delta = \gamma+1}^\chi \norm{\comm{h_{\gamma}}{h_{\delta}}}_2 = \\
    &=\sum_{\gamma = 1}^{\chi} \sum_{\delta = \gamma+1}^\chi \mathcal{O}(N) = \mathcal{O}(\chi^2N),
\end{align}
where we assumed that between any two color sets, each term in the set overlaps with only at most two others and that the norm of all couplings have been normalized to $\mathcal{O}(1)$.

\subsection{Implementing $n-$qubit evolution}
\label{sec:Pauli_gadget}

Consider an evolution of a Hamiltonian term supported on $n-$qubits. Such an evolution can be implemented with a single qubit rotation, conjugated between a $n-$qubit $\mathrm{CNOT}$ staircase (Pauli gadget method). Such a $\mathrm{CNOT}$ ladder can be implemented with a $O(\log(n))$ depth circuit without any ancillas \cite{Remaud_2025}. Thus, any evolution of a $n-$qubit Pauli term can be done in a $O(\log(n))$ depth.

\section{Example models}
\label{app:example_models}

\subsection{Fermi--Hubbard model}

Consider a Fermi--Hubbard model on a $d-$regular graph $\mathcal{G} = (V,E)$, meaning that each vertex $v \in V$ has exactly $d$ edges:
\begin{align}
    \label{eq:FH_hamiltonian}
    H_{\mathrm{FH}} &= H_{\mathrm{hop}} + H_{\mathrm{int}} = \\
    &=\sum_{(i,j)\in E} t_{ij}\,(a_i^\dagger a_j + a_j^\dagger a_i) + \sum_{(i,j)\in E} V_{ij} a_i^\dagger a_i a_j^\dagger a_j.
\end{align}
On a $d-$regular graph $d\leq\chi\leq d+1$ (see \cref{app:regular_graphs}), thus we need at most $\nu = \lceil (d+1)/2\rceil$ ancillary fermions per site. Even though the interaction part of the \cref{eq:FH_hamiltonian} is a four-fermion term, it already has a local support as we saw in \cref{app:FH_interaction_transformation}:
\begin{equation*}
    H_{\mathrm{int}} = \sum_{(i,j)\in E} V_{ij} a_i^\dagger a_i a_j^\dagger a_j = \sum_{(i,j)\in E} \frac{V_{ij}}{4}(\Id + \Z_i^{(0)})(\Id+\Z_j^{(0)}),
\end{equation*}
thus requiring no transformation. Therefore, the initial auxiliary state preparation proceeds identically to the one for $H_{\hop}$ as discussed in the main text, since we do not require any stabilizers for the $H_{\mathrm{int}}$. 

Finally, when performing the Trotterized time evolution:
\begin{align}
    &W_{\mathrm{FH}} = e^{-i\tau H_{\mathrm{FH}}} =\\&= \sum_{\gamma=1}^\chi \prod_{\alpha=1}^\chi \left( e^{-i\tau \tilde{h}_{\alpha}^{\mathrm{hop}}} e^{-i\tau \tilde{h}_{\alpha}^{\mathrm{int}}}\right) + \mathcal{O}(\Lambda \tau^2),
\end{align}
where note that $[\tilde{h}^{\mathrm{hop}}_\alpha,\tilde{h}^{\mathrm{int}}_\alpha] = 0$ and $\Lambda=\sum_{\alpha=1}^\chi \sum_{\beta =\alpha+1}^\chi \norm{[\tilde{h}^{\mathrm{hop}}_\alpha+\tilde{h}^{\mathrm{hop}}_\alpha, \tilde{h}^{\mathrm{hop}}_\beta+\tilde{h}^{\mathrm{hop}}_\beta]}_2$, hence $\Lambda = \mathcal{O}(\chi^2N)$. Thus, the depth for implementing a single-color Trotter layer is still $O(\log(\chi))$, and the full cost for implementing $M$ steps of the Trotterized time evolution is:
\begin{align}
    \mathrm{Depth}(W_{\mathrm{FH}}^M) &= \mathrm{Depth}_{\init} + M\mathrm{Depth}_{\mathrm{step}}\\
    &=\mathcal{O}(d\log(d) + d\log( N)+Md\log(d))\nonumber.
\end{align}

\subsection{SYK model}
\label{app:SYK_model}
We consider a quartic Sachdev--Ye--Kitaev (SYK) type Hamiltonian \cite{Maldacena_2016} acting on $N$ Majorana fermions $\{\gamma_i\}_{i=1}^N$,
\begin{equation}
H_{\mathrm{SYK}} = \sum_{e \in \mathcal{E}} J_e \prod_{i \in e} \gamma_i ,
\end{equation}
where $\mathcal{E}$ denotes the edge set of a 4-uniform hypergraph, such that each interaction term involves exactly four fermionic operators. No assumption is made on the distribution of the couplings $J_e$, which may be arbitrary. We restrict to a model defined on a 4-uniform $d$-regular hypergraph, so that each Majorana fermion participates in exactly $d$ quartic interaction terms. This defines a sparse SYK$_4$-type model, in contrast to the fully connected SYK$_4$ Hamiltonian, where the number of interaction terms per Majorana fermion scales as $\mathcal{O}(N^3)$. In the following, each $\gamma_i$ is taken to be either the $c^{(0)}_i$ or $d^{(0)}_i$ Majorana operator. Note that for $N$-Majorana fermion modes it is sufficient to choose $N/2$ fermions.

The first challenge in simulating the model is to construct a stabilizer graph $\mathcal{G}_{\mathrm{stab}}$ from the underlying 4-uniform hypergraph $\mathcal{G}$. A simple choice is to split each hyperedge $e = (i,j,k,l) \in \mathcal{E}$ into two edges $(i,j)$ and $(k,l)$ in the stabilizer graph $\mathcal{G}_{\mathrm{stab}}$. Since each Majorana mode participates in exactly $d$ hyperedges, this construction ensures that it also participates in exactly $d$ edges in $\mathcal{G}_{\mathrm{stab}}$. Since there are two Majorana modes associated with each physical site, this graph has degree at most $2d$, thus we need $\nu = d$ ancillas per fermion mode. In \cref{app:SYK_transform}, we show that such a splitting allows the stabilizers $P_{ij}^{(\ell)}$ to cancel the \emph{JW strings} appearing in the terms of $H_{\mathrm{SYK}}$.
 In \cref{app:regular_graphs}, we show that the chromatic index of a regular graph is directly proportional to its degree, hence the initial state preparation of the auxilliary modes can be implemented with depth
\begin{equation}
    \mathrm{Depth}_{\init} = \mathcal{O}\!\left(d\log(d) + d\log(N)\right).
\end{equation}
Furthermore, the transformed interaction terms have Pauli weight scaling as $w_{\mathrm{SYK}} = \mathcal{O}(d)$ (\cref{app:SYK_transform}). To determine the full depth of a single Trotter step, we require the chromatic index of the hypergraph $\mathcal{G}$, since this bounds the number of Trotter layers into which the terms can be grouped. In \cite{pippenger_hypergraph}, it was shown that the chromatic index of a $d$-regular hypergraph also scales as $\mathcal{O}(d)$. Consequently, the depth of a full Trotter step is
\begin{equation}
\mathrm{Depth}_{\mathrm{step}}\!\left(W_{\mathrm{SYK}}\right) = \mathcal{O}\!\left(d\log(d)\right).
\end{equation}

\end{document}